\newcommand{\powerset}{\mathcal{P}}
\newcommand{\eps}{\varepsilon}
\newcommand{\ceil}[1]{\left\lceil{#1}\right\rceil}
\newcommand{\abs}[1]{\left | #1 \right |}
\newcommand{\calH}{\mathcal{H}}
\newcommand{\calF}{\mathcal{F}}
\newcommand{\calM}{\mathcal{M}}
\newcommand{\T}{\mathcal{T}}
\newcommand{\E}{\text{E}}
\newcommand{\set}[1]{\left \{ #1 \right \}}
\newcommand{\setbuilder}[2]{\left\{ {#1} \, \middle| \, {#2} \right\}}
\newcommand{\Prp}[1]{\Pr\!\left[{#1} \right]}
\newcommand{\Prpcond}[3][]{\Pr_{{#1}}\left[ {#2} \, \middle| \, {#3} \right]}
\newcommand{\Ep}[1]{\E\!\left[{#1} \right]}
\newcommand{\Epcond}[3][]{\E_{{#1}}\left[ {#2} \, \middle| \, {#3} \right]}
\newcommand{\C}{\mathcal{C}}
\newcommand{\true}[0]{\texttt{true}}
\newcommand{\false}[0]{\texttt{false}}
\renewcommand{\subset}{\subseteq}
\newtheorem{lemma}{Lemma}
\newtheorem{theorem}{Theorem}
\newtheorem{corollary}{Corollary}
\newtheorem{fact}{Fact}
\theoremstyle{definition}
\newtheorem{definition}{Definition}
\newtheorem{remark}{Remark}
\newcounter{sideremark}
\newcommand{\newtext}[1]{#1}
\newcommand{\oldtext}[1]{}
\title{Fast Similarity Sketching\thanks{A preliminary
    version of this work was presented in the Proceedings of the 58th IEEE Annual Symposium on Foundations of Computer Science, FOCS'17, Berkeley, CA, USA, pages October 15-17, 2017, pages 663--671. This full version has
    stronger results and Jakob Houen as an added author.}}
\author{Søren Dahlgaard}
\author{Jakob Bæk Tejs Houen}
\author{Mathias Bæk Tejs Langhede}
\author{Mikkel Thorup}
\affil{University of Copenhagen\\\texttt{soren.dahlgaard@gmail.com,jakob@tejs.dk,mathias@tejs.dk,}\\\texttt{mikkel2thorup@gmail.com}}
\date{}
\begin{document}

\setcounter{page}{0}
\maketitle
\begin{abstract}
We consider the \emph{Similarity Sketching} problem: Given a universe
$[u] = \{0,\ldots, u-1\}$ we want a random function $S$ mapping
subsets $A\subseteq [u]$ into vectors $S(A)$ of size $t$, such that
the Jaccard similarity $J(A,B) = |A\cap B|/|A\cup B|$ between sets $A$
and $B$ is preserved.  More precisely, define $X_i = [S(A)[i] =
  S(B)[i]]$ and $X = \sum_{i\in [t]} X_i$. We want $\Ep{X_i}=J(A,B)$,
and we want $X$ to be strongly concentrated around $\Ep{X} = t \cdot J(A,B)$
(i.e. Chernoff-style bounds). This is a fundamental problem which has
found numerous applications in data mining, large-scale
classification, computer vision, similarity search, etc. via the
classic MinHash algorithm. The vectors $S(A)$ are also called
\emph{sketches}. Strong concentration is critical, for often we want
to sketch many sets $B_1,\ldots,B_n$ so that we later, for a query set
$A$, can find (one of) the most similar $B_i$. It is then critical
that no $B_i$ looks much more similar to $A$ due to errors in the
sketch.

The seminal \emph{$t\times$MinHash} algorithm uses $t$ random hash functions $h_1,\ldots, h_t$, and stores $\left ( \min_{a\in A} h_1(A),\ldots, \min_{a\in A} h_t(A) \right )$ as the sketch of $A$. The main drawback of MinHash is, however, its $O(t\cdot |A|)$ running time, and finding a sketch with similar properties and faster running time has been the subject of several papers. Addressing this, Li et al.~[NIPS'12] introduced \emph{one permutation hashing (OPH)}, which creates a sketch of size $t$ in $O(t + |A|)$ time, but with the drawback that possibly some of the $t$ entries are ``empty'' when $|A| = O(t)$. One could argue that sketching is not necessary in this case, however the desire in most applications is to have \emph{one} sketching procedure that works for sets of all sizes. Therefore, filling out these empty entries is the subject of several follow-up papers initiated by Shrivastava and Li~[ICML'14]. However, these ``densification'' schemes fail to provide good concentration bounds exactly in the case $|A| = O(t)$, where they are needed. 

In this paper we present a new sketch which obtains essentially the best of both worlds. That is, a fast $O(t\log t + |A|)$ expected running time while getting the same strong concentration bounds as $t\times$MinHash. Our new sketch can be seen as a mix between sampling with replacement and sampling without replacement. We demonstrate the power of our new sketch by considering popular applications in large-scale classification with linear Support Vector Machines (SVM) as introduced by Li et al.~[NIPS'11] as well as approximate similarity search using the Locality Sensitive Hashing (LSH) framework of Indyk and Motwani [STOC'98].

\end{abstract}

\thispagestyle{empty}

\newpage
\setcounter{page}{1}
\section{Introduction}
In this paper we consider the following problem which we call the
\emph{similarity sketching} problem. Given a large key universe
$[u] = \{0,\ldots,u-1\}$ and positive integer $t$ we want a random function $S$
mapping subsets $A\subseteq [u]$ into vectors (which we will call sketches)
$S(A)$ of size $t$, such that similarity is preserved. More precisely,
we consider the \emph{Jaccard
similarity} $J(A,B) = |A\cap B|/|A\cup B|$ between sets $A$ and $B$. Define
$X_i = [S(A)[i] = S(B)[i]]$ for each $i\in [t]$, where $S(A)[i]$ denotes the $i$th entry of the vector $S(A)$ and $[x]$ is
the Iverson bracket notation with $[x]=1$ when $x$ is true and $0$ otherwise.
We want $\Ep{X_i}=J(A,B)$ for each $i\in t$. Moreover,
we want $X = \sum_{i\in [t]} X_i$ to be strongly concentrated
around $\Ep{X} = t \cdot J(A,B)$. That is, the sketches can be used to estimate
$J(A,B)$ by doing a pair-wise comparison of corresponding
entries. We will call this
the \emph{alignment property} of the similarity sketch.
Strong concentration, with error probabilities dropping exponentially in $t$ like with Chernoff-bounds, is critical. Often we want
to sketch many sets $B_1,\ldots,B_n$ so that we later, for a query set
$A$, can find (one of) the most similar $B_i$. It is then critical
that no $B_i$ looks much more similar to $A$ due to errors in the
sketch.

The standard solution to the similary sketching problem is $t\times$MinHash
algorithm\footnote{\url{https://en.wikipedia.org/wiki/MinHash}}. The algorithm
works as follows: Let $h_0,\ldots, h_{t-1} :
[u]\to [0,1]$ be random hash functions and define $S(A) = (\min_{a\in
A} h_0(a), \ldots, \min_{a\in A} h_{t-1}(a))$. This corresponds to sampling
$t$ elements from $A$ with replacement and thus has all the above desired
properties.

MinHash was originally introduced by Broder et
al.~\cite{Broder97,BroderGMZ97,BCFM00} for the AltaVista search engine and
has since been used as a standard tool in many applications including duplicate
detection~\cite{BroderGMZ97,Henzinger06}, all-pairs
similarity~\cite{BayardoMS07}, large-scale learning~\cite{LiSMK11},
computer vision~\cite{ShakhnarovichDI08}, and similarity
search~\cite{IndykM98}.
The application \cite{LiSMK11} to Support Vector Machines (SVMs)
in Machine Learning is an instructive example of the use of
the alignment property. The basic idea is that we want the similarity between sets to be grow with a dot-product between associated
vectors. To get a bit vector, \cite{LiSMK11} suggests
that for each coordinate, we hash to get a single bit and based
on this bit, replace the coordinate with two bits: ${\tt 01}$ or ${\tt 10}$. Now,
for the dot-product, if we had a match in a coordinate, we get
a match in both bits, adding two to the dot-product. If we did not
have a match, then with probability 1/2,
either both or no bits will match. Dissimilarity is thus essentially halved
in the reduction. The more important thing is that more similar
sets are expected to get higher dot-products, and this is the main point for the SVM applications. Mathematically, a cleaner alternative is to use
the hash bit to replace the the coordinate with $-1/\sqrt t$ or
$1/{\sqrt t}$. Now, in expectation, the dot-product is exactly the Jaccard similarity.

The main drawback of $t\times$MinHash is the $O(t\cdot |A|)$ running
time that we pay because we have to find the minimum in $A$ with $t$
independent hash functions. To appreciate the scale of the different
parameters, let us just consider the classic application from
\cite{Broder97,BroderGMZ97,MRS08} for text similarity. For each text,
they look at the set of $w$-shingles\footnote{$w$-shingles are also referred to as
$w$-grams, e.g., when used for finding similarity between DNA sequences (see
\texttt{wikipedia.org/wiki/N-gram}).} where a $w$-shingle is a tuple
of $w$ consecutive words, e.g., \cite{Broder97,BroderGMZ97} use $w=4$.
The similarity between texts are the similarity between their sets of
$w$-shingles. With roughly $10^5$ common english words, the number of
possible 4-shingles is $u\approx 10^{20}$.  Also, note that the number
of distinct $4$-shingles in a large text can be close to the text size
even though the number of distinct words is much smaller, so the sets
we consider can be quite large. 

While the sketch size $t$ is typically
meant to be much smaller than the set size, it can still be sizeable,
e.g., \cite{LiSMK11} suggests using $t = 500$ and \cite{Li15} suggests
using $t=4000$. From a more theoretical perspective, if
we use Locallity Sensitive Hashing (LSH) for set similarity
among $n$ sets, then $t=\Omega(n^\rho)$ where $\rho$ is a constant
parameter. We shall return to this application later.

If we do not care about alignment, then an alternative to
$t\times$MinHash sketching is the Bottom-$t$ sketch described
in~\cite{Broder97,CK07,thorup13bottomk}. The idea is to use a single
hash function and just store the $t$
smallest hash values from $A$ (so $1\times$MinHash=Bottom-$1$). We can find the bottom-$t$ sketch of $A$ in $O(|A|)$ time. However, with $t>1$,
there is no alignment
between the $t$ sketch values from different sets, and the alignment is
needed for applications in LSH as well as for Support Vector Machines (SVM)
that we will also discuss later.

Bachrach and Porat~\cite{BachrachP13} suggested a more
efficient way of computing $t\times$MinHash values with $t$ different hash
functions. They use $t$ different polynomial hash functions that are related,
yet pairwise independent, so that they can systematically maintain the MinHash
for all $t$ polynomials in $O(\log t)$ time per element of $A$. There are two
issues with this approach: It is specialized to work with polynomials and
MinHash is known to have constant bias unless the polynomials considered have
super-constant degree~\cite{patrascu10kwise-lb}, and this bias does not decay
with independent repetitions. Also, because the experiments are only pairwise
independent, the concentration is only limited by Chebyshev’s inequality and
thus nowhere near the Chernoff bounds we want for many applications.

Another direction introduced by Li et al.~\cite{LiOZ12} is \emph{one
  permutation hashing} (OPH) which works by hashing the elements of
$A$ into $t$ buckets and performing a MinHash in each bucket using the
same hash function. While this procedure gives $O(t + |A|)$ sketch
creation time it also may create empty buckets and thus only obtains a
sketch with $t'\le t$ entries when $|A| = o(t\log t)$. One may argue
that sketching is not even needed in this case. However, a common goal
in applications of similarity sketching is to have \emph{one}
sketching procedure which works for all set size -- one data structure
that works for an entire collection of data sets of different sizes in
the case of approximate similarity search. It is thus very desirable
that the similarity sketch works well independently of the size of the
input set.

Motivated by this, several follow-up
papers~\cite{ShrivastavaLiUAI14,ShrivastavaLiICML14,Shrivastava17}
have tried to give different schemes for filling out the empty entries
of the OPH sketch (``densifying'' the sketch). These papers all
consider different ways of copying from the full entries of the sketch
into the empty ones. Due to this approach, however, these
densification schemes all fail to give good concentration guarantees
when $|A|$ is small, which is exactly the cases in which OPH gives
many empty bins and densification is needed. This is because of the
fundamental problem that unfortunate collisions in the first round
cannot be resolved in the second round when copying from the full
bins. To understand this consider the following extreme example: Let
$A$ be a set with two elements. Then with probability $1/t$ these two
elements end in the same bin where only one survives (the one with the
smallest hash value). Densification will then just copy the surviver
to all $t$ entries. Such issues may lead to very poor similarity
estimation and this is illustrated with experiments in
\Cref{fig:histogram}. A further issue is that the state-of-the-art
densification scheme of Shrivastava~\cite{Shrivastava17} has a running
time of $O(t^2)$ when $|A|= O(t)$.

\begin{figure}[htbp]
    \centering
    \includegraphics[width=.6\textwidth]{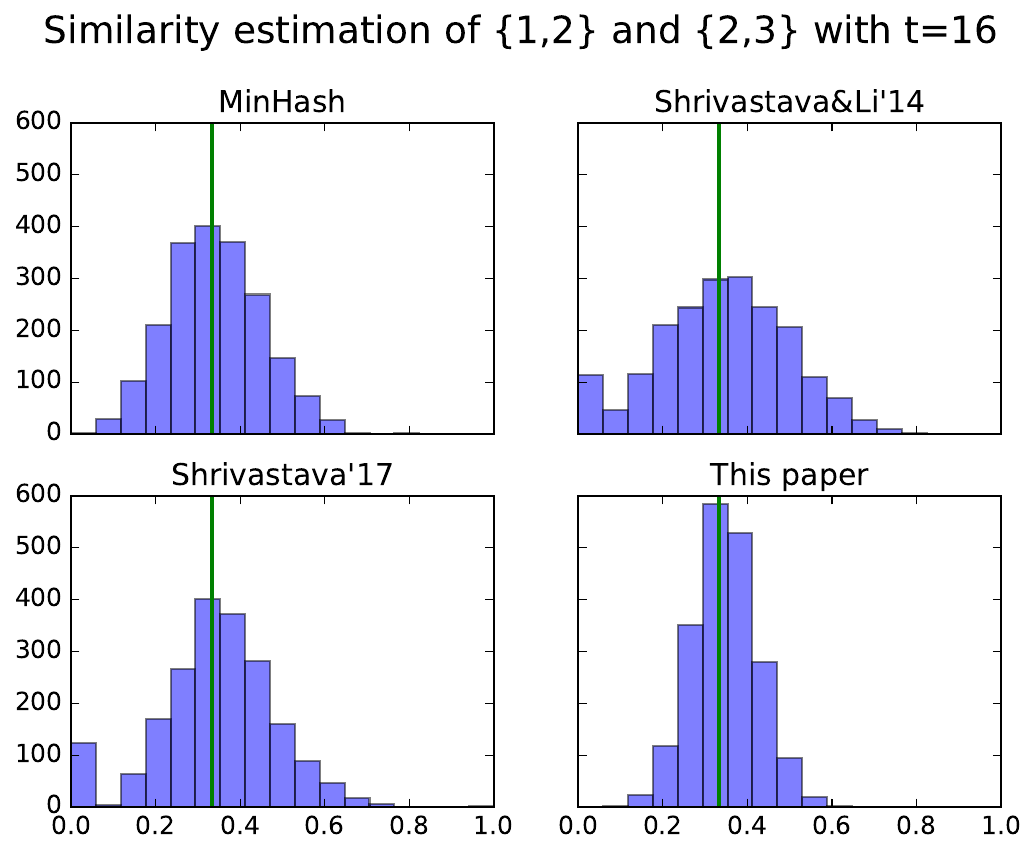}
    \caption{Experimental evaluation of similarity estimation of the sets $A =
    \{1,2\}$ and $B = \{2,3\}$ with different similarity sketches and $t=16$.
    Each experiment is repeated 2000 times and the $y$-axis reports the
    frequency of each estimate. The green line indicates the actual similarity.
    The two methods based on OPH perform poorly as
    each set has a probability of $1/t$ to be a single-element sketch. Our new
    method outperforms $t\times$MinHash as it has an element of ``without replacement''.}
    \label{fig:histogram}
\end{figure}

\subsection{Our contribution}
In this paper we obtain a sketch which essentially obtains the best of
both worlds. That is, strong concentration guarantees for similarity
estimation as well as a fast expected sketch creation time of $O(t\log
t + |A|)$. On top of that, our new sketching algorithm is simple and easy
to implement. 

Our new sketch can be seen as a mixture between sampling
with and without replacement and in many cases outperforms
$t\times$MinHash. An example of this can be seen in the toy example of
\Cref{fig:histogram}, where the ``without replacement''-part of our
sketch gives better concentration compared to MinHash.  Our sketch can
be employed in places where $t\times$MinHash is currently
employed to improve the running time from the ``quadratic'' $O(t\cdot
|A|)$ to the ``near-linear'' $O(t\log t + |A|)$. In this paper we
focus on two popular applications, which are large-scale learning with
linear SVM and approximate similarity search with LSH. Before
discussing these applications, we describe
our result  more formally.

\begin{theorem}\label{thm:prob}\footnote{This powerful theorem
    was not in our original conference version \cite{DahlgaardKT17}.}
        Let $[u] = \set{0,1,2,\ldots,u-1}$ be a set of keys and let $t$ be a
        positive integer. There exists an algorithm that given a set $A \subset
        [u]$ in expected time $O\left(\abs{A}+t \log t\right)$ creates a size-$t$
        vector $v(A)$ of non-negative real numbers with the following properties.

        For two sets $A,B\subseteq [u]$ with Jaccard similarity $J(A, B) = J$ it holds that $v(A \cup B)_i =
        \min\set{v(A)_i,v(B)_i}$ for each index $i \in [t]$. For $i \in [t]$
        let $X_i = 1$ if $v(A)_i = V(B)_i$ and $0$ otherwise. Let $I \subset [t]$
        be a subset of $k$ indices. Let $j \in [t] \setminus I$, then,
        \begin{align*}
            \min\left(J, \frac{tJ - \sum_{i \in I} X_i}{t - k}\right)
                \le \Prpcond{X_j = 1}{\sigma\left( (X_i)_{i \in I} \right)}
                \le \max\left(J, \frac{tJ - \sum_{i \in I} X_i}{t - k}\right)
            \, .
        \end{align*}
    \end{theorem}

    If we sampled with replacement then the probability that $X_j = 1$ when conditioning
    on $(X_i)_{i \in I}$ is exactly $J$, and if we sampled without replacement then
    the probability that $X_j = 1$ when conditioning on $(X_i)_{i \in I}$ is exactly
    $\frac{tJ - \sum_{i \in I} X_i}{t - k}$. Thus the theorem shows qualitatively that
    our new sketch falls between sampling with replacement and sampling without replacement.

    Now a nice implication of the theorem is that we get classic Chernoff
    concentration bounds.
    \begin{corollary}
        \label{cor:mainThmIntro}
        Use the same setting as \Cref{thm:prob} and let $X = \frac{1}{t}
        \sum_{i \in [t]} X_i$. Then $\Ep{X} = J$ and for every $\delta
        > 0$ it holds that:
        \begin{align*}
            \Prp{X \ge J(1+\delta)}
            & \le 
            \left ( \frac{e^
            \delta}{(1+\delta)^{1+\delta}} \right )^{t J}
            \, ,
            \\
            \Prp{X \le J(1-\delta)}
            & \le 
            \left ( \frac{e^{-\delta}}{(1-\delta)^{1-\delta}} \right )^{t J}
            \, .
        \end{align*}
    \end{corollary}
    The above theorems assume access to fully random hashing (this
    is also assumed for the standard MinHash sketch).
However, we will also show how to implement our sketch using 
mixed tabulation hashing (introduced by Dahlgaard et
al.~\cite{DahlgaardKRT15}) which is practical and can be evaluated in
$O(1)$ time. The concentration bounds get slightly weaker and more
complicated, as they do with any realistic hashing scheme.

\subsection{Speeding up Locality Sensitive Hashing on Sets}\label{sec:lsh_intro}
One of the most powerful applications of the $t\times$MinHash algorithm
is for the \emph{approximate set similarity search problem} using
\emph{Locality Sensitive Hashing (LSH)}. For this application we will
crucially need both the alignment and strong probability
bounds from Theorem \ref{thm:prob}. Our fundamental contribution will be to improve the time
bound, reducing the effect of the set size from multiplicative to
additive. This is very important when dealing with larger sets, e.g.,
the above mentioned examples where the sets are the 4-shingles of
texts \cite{Broder97,BroderGMZ97,MRS08}.

While our new similarity sketch forms the base of our improvement, we
need several other ideas to address other challenges, particularly
when it comes to high probability results. To describe our
contribution, we first need to revisit the Locality Sensitive Hashing
framework introduced by Indyk and Motwani~\cite{IndykM98}. It is a
general framework based on the following type of data-independent hash
families:
\begin{definition}[Locality sensitive hashing~\cite{IndykM98}]\label{def:lsh}
    Let $(X, S)$ be a similarity space and let $\calH$ be a family of hash
    functions $h : X \to R$. We say that $\calH$ is
    $(s_1, s_2, p_1, p_2)$-sensitive if for any $x, y \in X$ and $h \in \calH$
    chosen uniformly random we have that
    \begin{itemize}
        \item If $S(x, y) \ge s_1$ then $\Prp{h(x) = h(y)} \ge p_1$.
        \item If $S(x, y) \le s_2$ then $\Prp{h(x) = h(y)} \le p_2$.
    \end{itemize}
\end{definition}
The Locality Sensitive Hashing framework takes a $(s_1, s_2, p_1,
p_2)$-sensitive family $\calH$ for a similarity space $(X, S)$, and
uses it to solve the Approximate Similarity Search problem for a stored
set $Y\subseteq X$ with parameters $0 < s_2 < s_1 < 1$. That is, given
a query $q \in X$ it returns an element $a \in Y$ with $S(a, q) \ge
s_2$, if there exists an element $b \in Y$ with $S(b, q) \ge s_1$ with
constant probability. 
Note that the definition of the locality
sensitive hash function $\calH$ is oblivious to the concrete stored
set $Y$.

In this paper we only focus on the 
Jaccard set similarity with stored
set of sets $Y=\calF$.  
The classical way of using the LSH framework with respect to the
Jaccard similarity is using the seminal MinHash algorithm, which was
originally introduced by Broder et
al.~\cite{Broder97,BroderGMZ97}. The MinHash algorithm is defined as
follows: Given a family $\calH$ of hash functions $h : U \to R$ we
define a new family $\calH^{\min}$ of hash functions $h^{\min} :
\powerset(U) \to U$ where $h^{\min}(A) = \arg\min_{x \in A} h(x)$ for
any $A \subseteq U$. If the domain $R$ is large enough, we can assume
that there are no collisions, hence that $h^{\min}$ is well-defined. If $h
\in \calH$ is chosen uniformly at random we get that $\Prp{h^{\min}(A) =
  h^{\min}(B)} = J(A, B)$ for any $A, B \subseteq U$. This shows that
$\calH^{\min}$ is a $(j_1, j_2, j_1, j_2)$-sensitive family for any $0
< j_2 < j_1 < 1$.

The next idea from~\cite{IndykM98} is to create sketches using $K$ functions from $\calH^{\min}$.
For a set $A$, we get the sketch $S(A) = (h_{0}^{\min}(A), \ldots, h_{K - 1}^{\min}(A))$ where $h_{0}, \ldots, h_{K - 1} \in \calH$. If the hash functions
are independent, then $\Prp{S(A) = S(B)} = J(A, B)^K$, so $S$
is $(j_1, j_2, j_1^K, j_2^K)$-sensitive. Setting 
$K = \ceil{\frac{\log(n)}{\log(1/j_2)}}$, we
get that $S$ is  $(j_1, j_2, O(1/n^\rho),1/n)$-sensitive where
$\rho= \frac{\log(1/j_1)}{\log(1/j_2)}$. While the above construction
may seem a bit ad-hoc, O'Donnell et al.~\cite{ODonnellWZ14} 
have shown it to be optimal in the sense that 
we cannot in general construct a $(j_1, j_2, o(1/n^\rho),1/n)$-sensitive 
family of hash functions.
With $n=\abs{\calF}$ stored sets, we only
expect a constant number of false matches $S(A)=S(Q)$ where $A\in\calF$ 
and $J(A,Q)< j_2$.

To get a positive match with constant probability, we use
$L=\ceil{n^\rho}$ sketch functions $S_0,\ldots,S_{L-1}$. If we have a
set $B$ with $J(Q,B)\geq j_2$, then with constant probability, there
is an $i\in[L]$ such that $S_i(B)=S_i(Q)$. To create a data structure,
for each $i$, we have a hash table that with each sketch value $s$ stores
pointers to all sets $A\in \calF$ with $S_i(A)=s$. When we query a set
$Q$ we compute the sketch $S_i(Q)$ and
lookup all the matches $A\in \calF$ with
$S_i(A)=S_i(Q)$ using the hash tables for each $i\in [L]$.
We check the matches one by one to see if
$J(A,Q)\leq j_2$, stopping if a good one is
found. This data structure has constant 1-sided error probability.
It uses $O(n \cdot L + \sum_{A \in \calF} \abs{A})
= O(n^{1 + \rho} + \sum_{A \in \calF} \abs{A})$ space and
$O(L \cdot K \cdot \abs{Q}) = O(n^\rho \log n \cdot \abs{Q})$ query time.
More precisely, the query time is dominated by
two parts: 
\begin{itemize}
\item[(1)]
We have to compute $O(L \cdot K)$ hash values for similarity sketches which
takes $O(L \cdot K \cdot \abs{Q})$ time using $O(L \cdot K)\times$MinHash.
\item[(2)] In expectation the data structure returns $O(L)$ false positives
which have to be filtered out. This takes $O(L \cdot \abs{Q})$ time.
\end{itemize}
Different techniques has been used to speed up the query time and mostly
the focus has been on improving the dominant part (1). Andoni and Indyk~\cite{AndoniI06} looked at
the general LSH framework and limited the number of evaluations of locality
sensitive hash functions. The idea is to create the sketches by combining smaller
sketches together. More precisely, a much smaller collection of sketches of size
$m = o(L)$ is created and then every $\binom{m}{t}$ combination of sketches is
formed. This technique is also known as tensoring. In the context of Jaccard
similarity it improved part (1) of the query time from 
$O(L\cdot K\cdot\abs{Q})$ to $O(L\cdot\abs{Q})$, matching the
bound for part (2). Thus they got an overall query time of 
$O(L\cdot\abs{Q})=O(n^\rho \cdot \abs{Q})$.

\paragraph{Contribution}
Our original conference version \cite{DahlgaardKT17} had overlooked
\cite{AndoniI06} and focussed directly on improving the original bounds
in (1) and (2).  Concerning (1), we note that the analysis assumes that
all the values in the $O(L \cdot K)\times$MinHash are independent.
This is not the case for our new similarity sketch from Theorem
\ref{thm:prob} with $t=O(L \cdot K)$, but it turns out that the
probability bounds from Theorem \ref{thm:prob} do suffice. This
implements (1) in
$O(t\log t+\abs{Q})=O(L\cdot K\cdot \log n + \abs{Q})$.
For a better implementation, we can first create an
intermediate sketch of size $O(\log^2(n))$ and then sample the $L$
sketches of size $K$ from this intermediate sketch. This improves part
(1) of the query time to $O(K\cdot L + \abs{Q})$.

Improving (2) requires some different ideas, but already in
\cite{DahlgaardKT17}, we improved part (2)
of the query time to $O(L + \abs{Q})$.

Christiani~\cite{Christiani17} noticed that our construction
also composes nicely with tensoring technique~\cite{AndoniI06}
so that we only need $L$ instead of $L\cdot K$ sketch values.
As a result, part (1) of the query time is improved
from $O(K\cdot L + \abs{Q})$
to $O(L + \abs{Q})$, matching the time for part (2). Hence
the total query time becomes $O(L + \abs{Q})=O(L + \abs{Q})=
O(n^\rho + \abs{Q})$, which is the natural target for constant
error probability. This is the combined solution presented
in the current full paper.

Often we
want a smaller error probability $\eps > 0$, e.g., $\eps=1/n$.
The generic standard approach is to use $O(\log(1/\eps))$ independent data
structures, each failing with constant probability, and then return
the best solution found, if any. Indeed this is suggested by Motwani and
Indyk in \cite[p. 605]{IndykM98} and \cite[p. 327]{HIM12}. 
Since the data structures can use a common representation
of the sets, we would get a space usage of
$O(n^{1 + \rho} \log(1/\eps) + \sum_{A \in \calF} \abs{A})$
and a query time of $O((n^\rho + \abs{Q}) \log(1/\eps))$.

Here we further improve the query time to
$O\left( n^\rho \log(1/\eps) + \abs{Q}\right)$
while having the same space usage. Thus we preserve our optimal linear
dependence on $\abs{Q}$ even for high probability results. Adding
it all up, we prove the following result for the approximate
set similarity search problem using LSH:
\begin{theorem}\label{thm:LSH}
We are given a family $\calF$ of up to $n$ sets from a large universe $U$. Moreover,
we are given two constant parameters $j_1$ and $j_2$ with $0 < j_2 < j_1 < 1$, as
well as an error parameter $\eps > 0$ which may be subconstant.

We present an Approximate Similarity Search data structure
with 1-sided error probability $\eps$: given a query set $Q\subset U$,
if there is a set $B \in \calF$ with $J(B, Q) \geq j_1$, the data structure
returns a set $A \in \calF$ with $J(A, Q) \geq j_2$ with probability $1-\eps$.
Moreover, if $A$ is returned, it always has $J(A, Q) \geq j_2$.
With $\rho = \frac{\log(1/j_1)}{\log(1/j_2)}$ our data structure uses
    $O\left(n^{1+\rho} \log(1/\eps)  + \sum_{A \in \calF}\abs{A}\right)$ space and it has
    query time $O\left( n^\rho \log(1/\eps) + \abs{Q}\right)$.
\end{theorem}

\subsection{Notation}
For a real number $x$ and an integer $k$ we define $x^{\underline{k}} = x(x-1)(x-2) \ldots (x-k+1)$.
For an expression $P$ we let $[P]$ denote the variable that is $1$ if $P$ is true
and $0$ otherwise. For a non-negative integer $n$ we let $[n]$ denote the set
$[n] = \set{0,1,2,\ldots,n-1}$.

\section{Fast Similarity Sketching}\label{sec:sketch}
In this section we present our new sketching algorithm, which takes a set
$A\subseteq [u]$ as input and produces a sketch $S(A,t)$ of size $t$. When $t$
is clear from the context we may write just $S(A)$.

Our new similarity sketch is simple to describe: Let $h_0,\ldots, h_{2t-1}$ be
random hash functions such that for $i\in [t]$ we have $h_i : [u]\to [t]\times
[i,i+1)$ and for $i\in \{t,\ldots, 2t-1\}$ we have $h_i : [u]\to \{i-t\}\times
[i,i+1)$. For each hash function $h_i$ we say that the output is split into a
bin, $b_i$, and a value, $v_i$. That is, for $i\in [2t]$ and $a\in [u]$ we have
$h_i(a) = (b_i(a), v_i(a))$, where $b_i(a)$ and $v_i(a)$ are restricted as
described above. We may then define the $j$th entry of the sketch $S(A)$ as
follows:
\begin{equation}\label{eq:sketchdef}
    S(A)[j] = \min\!\left\{v_i(a)\mid a\in A, i\in [2t], b_i(a) = j\right\}\ .
\end{equation}
In particular, the hash functions $h_t,\ldots, h_{2t-1}$ ensure that each entry
of $S(A)$ is well-defined. Furthermore, since we have $v_i(a) < v_j(b)$ for any
$a,b\in [u]$ and $0\le i < j < 2t$ we can efficiently implement the sketch
defined in \eqref{eq:sketchdef} using the Similarity-Sketch procedure in \Cref{alg:sketch}. A bin $S[b]$ gets ``filled'' the first time it gets assigned a
value $<\infty$ in line \ref{line:hash}. The algorithm terminates when all bins are filled, and if this happens in the first round with $i=0$, then
the sketch created is identical to that of \emph{one permutation
  hashing}~\cite{LiOZ12}.

\begin{algorithm}
    \caption{\FuncSty{Similarity-Sketch}}
    \label{alg:sketch}
    \DontPrintSemicolon
    \SetKwInOut{Input}{input}\SetKwInOut{Output}{output}
    \Input{$A$, $t$, $h_0,\ldots, h_{2t-1}$}
    \Output{The sketch $S(A,t)$}
    \BlankLine
    $S\gets \infty^t$\;
    $c\gets 0$\;
    \For{$i=0,\cdots,2t-1$}{
        \For{$a\in A$}{
            $b,v\gets h_i(a)$\label{line:hash}\;
            \If{$S[b] = \infty$}{
                $c\gets c+1$\;
            }
            $S[b]\gets \min(S[b], v)$\;
        }
        \If{$c = t$}{
            \Return $S$\;
        }
    }
\end{algorithm}

We will start our analysis of $S(A)$ by bounding the running time of
\Cref{alg:sketch}.
\begin{lemma}\label{lem:sketch_time}
    Let $A\subseteq [u]$ be some set and let $t$ be a positive integer. Then
    the expected running time of \Cref{alg:sketch} is $O(t\log t + |A|)$.
\end{lemma}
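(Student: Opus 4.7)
The plan is to bound the expected number $R$ of outer-loop iterations and multiply by the per-iteration cost. Each iteration processes every $a \in A$ with $O(1)$-time hash evaluation, bin update, and counter update, plus a constant-time $c = t$ check at the end; hence one round costs $O(\abs{A} + 1)$. Since $R \le 2t$ deterministically (the outer loop ranges over $[2t]$), the total running time is $O(R(\abs{A}+1)) = O(t + R\cdot\abs{A})$, so it suffices to show $\Ep{R \cdot \abs{A}} = O(t \log t + \abs{A})$.

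The core step is a coupon-collector reduction applied to the random rounds $i \in [t]$: for such $i$, the bin component $b_i(a)$ of $h_i(a)$ is uniform in $[t]$, and by independence of the hash functions all such bins across $(i,a)$ pairs are mutually independent. Let $N$ be the standard coupon-collector time, i.e.\ the smallest $k$ such that $k$ iid uniform placements into $[t]$ hit every bin; then $\Ep{N} = t H_t = O(t \log t)$. I would then show that $R \cdot \abs{A} \le 2N + \abs{A}$ always, by a short case split. If $N \le t\cdot\abs{A}$, then all $t$ bins become non-empty within the first $\lceil N/\abs{A} \rceil$ random rounds and the algorithm returns immediately afterwards, so $R \cdot \abs{A} = \lceil N/\abs{A} \rceil \cdot \abs{A} \le N + \abs{A}$. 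Otherwise $N > t\cdot\abs{A}$, and the algorithm may proceed into the deterministic rounds $h_t, \ldots, h_{2t-1}$; still $R \le 2t$, which gives $R \cdot \abs{A} \le 2t\cdot\abs{A} < 2N$. Either way $R \cdot \abs{A} \le 2N + \abs{A}$.

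Taking expectations, $\Ep{R \cdot \abs{A}} \le 2\Ep{N} + \abs{A} = O(t \log t + \abs{A})$, which combined with the $O(t)$ deterministic overhead for outer-loop checks yields the claimed bound. The main (and essentially only) subtlety is the case split above establishing $R \cdot \abs{A} \le 2N + \abs{A}$; the coupon-collector expectation $\Ep{N} = t H_t$ and the $O(1)$ per-element work are standard. The deterministic hash functions $h_t, \ldots, h_{2t-1}$ act only as a safety net guaranteeing every entry of $S(A)$ is eventually well-defined, and their extra cost is absorbed by the deterministic $R \le 2t$ bound used in the second case.
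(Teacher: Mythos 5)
Your proof is correct, and it takes a genuinely different route from the paper's. The paper splits on whether $|A| \le 2\log t$ (where the trivial bound $O(t\cdot|A|)$ suffices) or not, and in the latter case fixes the smallest round index $i$ with $|A|\cdot i > 2t\log t$, shows that all bins are filled after $i$ rounds except with probability at most $1/t$, and charges the remaining (rare) rounds at their deterministic maximum cost $|A|\cdot t$. Your approach instead couples the stream of bins $b_i(a)$ for $(i,a)\in[t]\times A$ (which are indeed i.i.d.\ uniform on $[t]$ under the fully random hash assumption) with the classic coupon-collector variable $N$, and proves the clean pathwise inequality $R\cdot|A| \le 2N + |A|$ via the case split on whether $N\le t|A|$; taking expectations with $\Ep{N}=tH_t$ then finishes. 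What your version buys is that it avoids both the $|A|$-versus-$\log t$ case split and any tail-probability computation, replacing them by a single deterministic coupling bound and the exact coupon-collector expectation; the paper's version is slightly more self-contained in that it does not invoke the coupon-collector identity. The minor things worth noting in a polished write-up: the coupling requires the extension of the placement sequence beyond the $t|A|$ placements the algorithm actually makes (so that $N$ is a.s.\ finite); the degenerate case $|A|=0$ is handled by your second case since then $t|A|=0<N$ and $R|A|=0$; and the hypothesis that the hash functions are fully random (so bins are mutually independent) is essential and should be stated, since the paper's later ``practical implementation'' paragraph replaces it with mixed tabulation.
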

\begin{proof}
  We always have a trivial worst-case upper bound of $O(t\cdot |A|)$ and
  this is $O(t\log t)$ if $|A|=O(\log t)$. Otherwise, we may assume
  $|A| \geq 2\log t$. Fix $i$ to be the smallest
            value such that $|A|\cdot i \geq 2\cdot t\log t$. Then $i\leq t$, and then the
            probability that a given bin is empty after evaluating
            $h_0,\ldots, h_{i-1}$ is at most
            \[
                (1 -1/t)^{|A|\cdot i} \le (1-1/t)^{2\cdot t\log t} \le
                1/t^2\ .
            \]
            It follows that the probability of any bin being empty is at most
            $1/t$ and thus the expected running time is $O(|A|\cdot i +
            \frac{|A|\cdot t}{t}) = O(t\log t + |A|)$.
\end{proof}
Next, we will prove several properties of the sketch. The first is an
observation that the sketch of the union of two sets can be computed solely
from the sketches of the two sets.

\begin{fact}
    Let $A,B$ be two sets and let $t$ be a positive integer. Then
    \[
        S(A\cup B,t)[i] = \min\!\left(S(A,t)[i], S(B,t)[i]\right)\,.
    \]
\end{fact}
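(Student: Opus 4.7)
The plan is to derive the identity directly from the defining formula~\eqref{eq:sketchdef}, sidestepping \Cref{alg:sketch} entirely. The key observation is that $S(C,t)[j]$, as defined, is simply a minimum over the index set $I_C(j) = \{(a,\ell) : a \in C,\, \ell \in [2t],\, b_\ell(a) = j\}$ of the values $v_\ell(a)$. Since set union distributes over the ``$a \in \cdot$'' clause, we have $I_{A \cup B}(j) = I_A(j) \cup I_B(j)$. Combining this with the elementary identity $\min(X \cup Y) = \min(\min X,\min Y)$ for multisets $X,Y$ gives $S(A \cup B, t)[j] = \min(S(A,t)[j],\, S(B,t)[j])$ directly. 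So the whole proof is two lines once one unpacks~\eqref{eq:sketchdef}.

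There is essentially no obstacle, but two small points are worth flagging. First, the statement is about the \emph{mathematical} sketch from~\eqref{eq:sketchdef}, not about the output of \Cref{alg:sketch}; the early-termination check in the algorithm is irrelevant here, and the text preceding the algorithm already argues that what \Cref{alg:sketch} returns equals~\eqref{eq:sketchdef} thanks to the monotonicity $v_i(a) < v_j(b)$ for $i<j$. Second, well-definedness of the minimum (i.e., that each $I_C(j)$ is non-empty once $C \ne \emptyset$) is guaranteed by the auxiliary hash functions $h_t,\ldots,h_{2t-1}$, but non-emptiness is not actually needed for the claim provided one uses the convention $\min \emptyset = \infty$. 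With these remarks out of the way, writing the formal proof amounts to a single displayed chain of equalities.
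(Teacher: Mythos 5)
Your argument is correct, and it is the natural (indeed, essentially the only) proof; the paper itself states this as a \emph{Fact} with no proof at all, treating it as an immediate consequence of the defining equation~\eqref{eq:sketchdef}. Your two-line unpacking---$S(C,t)[j]$ is a minimum over $\{(a,\ell): a\in C,\ \ell\in[2t],\ b_\ell(a)=j\}$, this index set distributes over union, and $\min$ distributes over union of sets---is precisely the argument the authors left implicit, and your side remarks about the algorithm-vs-definition distinction and the $\min\emptyset=\infty$ convention are both accurate and harmless.
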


The main technical lemma regarding the sketch is \Cref{lem:prob} below.
The lemma show that our sketch can qualitatively be seen as a mixture
between sampling with replacement and sampling without replacement.
We will use this lemma to show that we get an
unbiased estimator as well as Chernoff-style concentration bounds.
\begin{lemma}\label{lem:prob}
	Let $A, B$ be sets with Jaccard similarity $J(A, B) = J$, let $t$ be a
    positive integer For each $i \in [t]$ let
	$X_i = \left[ S(A, t)[i] = S(B, t)[i] \right]$. Let $I \subset [t]$ be a
	subset of $k$ indices. Let $j \in [t] \setminus I$ then
	\begin{align*}
		\min\left(J, \frac{tJ - \sum_{i \in I} X_i}{t - k}\right)
			\le \Prpcond{X_j = 1}{\sigma\left( (X_i)_{i \in I} \right)}
			\le \max\left(J, \frac{tJ - \sum_{i \in I} X_i}{t - k}\right)
		\, .
	\end{align*}
\end{lemma}
\begin{proof}
	Define $\T = (T_0, T_1, \ldots, T_{2t - 1})$ in the following way. Let
	$T_0 = b_0 \left( A \cup B \right)$ and for $i \ge 1$ let
	$T_i = b_i \left( A \cup B \right) \setminus \left( T_0 \cup \ldots \cup T_{i - 1} \right)$.
	Assume in the following that $\T$ is fixed. It clearly suffices to prove
	this theorem for all possible choices of $\T$. Let $n = \abs{A \cup B}$,
	then $nJ = \abs{A \cap B}$.

	We will prove the claim when the set $I$ is chosen uniformly random
	among the subsets of $[t]$ of size $k$, and where $j$ is chosen uniformly
	random from $[t] \setminus I$. Because of symmetry this will suffice.

	The probability that $j \in T_{h}$ is
	$\frac{\abs{T_h} - \abs{T_h \cap I}}{t - k}$. Conditioned on $j \in T_{h}$
	the probability that $X_j = 1$ is exactly
	$\frac{nJ - \sum_{i \in T_h \cap I} X_i}{n - \abs{T_h \cap I}}$.
	So the probability that $X_j = 1$ is
	\[
		p = \Ep{\sum_{h \in [2t]}
			\frac{\abs{T_h} - \abs{T_h \cap I}}{t - k}
		\cdot
			\frac{nJ - \sum_{i \in T_h \cap I} X_i}{n - \abs{T_h \cap I}}
		}
	\]
	If we fix $\abs{T_h \cap I}$ then $T_h \cap I$ is a uniformly random subset
	of $I$ of size $\abs{T_h \cap I}$. This implies that
	\[
		\Epcond{\sum_{i \in T_h \cap I} X_i}{\sigma\left( (X_i)_{i \in I}, \abs{T_h \cap I} \right)}
			= \abs{T_h \cap I}\frac{\sum_{i \in I} X_i}{k}
	\]
	and that
	\begin{align*}
		&\sum_{h \in [2t]}
			\Epcond{
				\frac{\abs{T_h} - \abs{T_h \cap I}}{t - k}
			\cdot
				\frac{nJ - \sum_{i \in T_h \cap I} X_i}{n - \abs{T_h \cap I}}
			}{\sigma\left( (X_i)_{i \in I}, \abs{T_h \cap I} \right)}
	\\
		&\qquad= \sum_{h \in [2t]}
				\frac{\abs{T_h} - \abs{T_h \cap I}}{t - k}
			\cdot
				\frac{nJ - \abs{T_h \cap I}\frac{\sum_{i \in I} X_i}{k}}{n - \abs{T_h \cap I}}
	\end{align*}

	If $J \le \frac{tJ - \sum_{i \in I} X_i}{t - k}$ then
	$J \ge \frac{\sum_{i \in I} X_i}{k}$ which implies that
	\[
		J 
			\le \frac{nJ - \abs{T_h \cap I}\frac{\sum_{i \in I} X_i}{k}}{n - \abs{T_h \cap I}}
			\le \frac{\abs{T_h}J - \abs{T_h \cap I}\frac{\sum_{i \in I} X_i}{k}}
					 {\abs{T_h} - \abs{T_h \cap I}}
	\]
	and inserting these estimates give us
	\begin{align*}
		J 
			&= \sum_{h \in [2t]} \frac{\abs{T_h} - \abs{T_h \cap I}}{t - k} \cdot J
		\\
			&\le \sum_{h \in [2t]} \frac{\abs{T_h} - \abs{T_h \cap I}}{t - k} 
			\cdot \frac{nJ - \abs{T_h \cap I}\frac{\sum_{i \in I} X_i}{k}}{n - \abs{T_h \cap I}}
		\\
			&\le \sum_{h \in [2t]} \frac{\abs{T_h} - \abs{T_h \cap I}}{t - k} 
			\cdot \frac{\abs{T_h}J - \abs{T_h \cap I}\frac{\sum_{i \in I} X_i}{k}}
					   {\abs{T_h} - \abs{T_h \cap I}}
		\\
			&= \frac{tJ - \sum_{i \in I} X_i}{t - k}
	\end{align*}
	Similar calculations shows that if $J \ge \frac{tJ - \sum_{i \in I} X_i}{t - k}$
	then
	\begin{align*}
		J 
			\ge \sum_{h \in [2t]} \frac{\abs{T_h} - \abs{T_h \cap I}}{t - k} 
			\cdot \frac{nJ - \abs{T_h \cap I}\frac{\sum_{i \in I} X_i}{k}}{n - \abs{T_h \cap I}}
			\ge \frac{tJ - \sum_{i \in I} X_i}{t - k}
	\end{align*}
	which finishes the proof.
\end{proof}

As a corollary we immediately get that the estimator is unbiased.
\begin{lemma}
    Let $A,B$ be sets with Jaccard similarity $J(A,B) = J$ and let $t$ be a
    positive integer. Let $X_i = [S(A,t)[i] = S(B,t)[i]]$ and let $X =
    \sum_{i\in [t]} X_i$. Then $\Ep{X} = tJ$.
\end{lemma}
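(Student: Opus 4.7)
The plan is to derive this as an immediate corollary of \Cref{lem:moment} by specializing to singletons and then applying linearity of expectation. Concretely, for each fixed $i \in [t]$, I would apply the lemma with $k = 1$ and $I = \{i\}$. The upper bound of \Cref{lem:moment} gives $\Ep{X_i} \le J$. For the lower bound, the side condition $tJ \ge k-1$ becomes $tJ \ge 0$, which holds trivially, so the lemma yields $\Ep{X_i} \ge \frac{(tJ)^{\underline{1}}}{t^{\underline{1}}} = \frac{tJ}{t} = J$. Combining the two bounds, $\Ep{X_i} = J$ for every $i \in [t]$.

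Then by linearity of expectation,
\begin{align*}
    \Ep{X} = \sum_{i \in [t]} \Ep{X_i} = t \cdot J,
\end{align*}
which is the claimed identity.

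There is essentially no obstacle here: the heavy lifting, namely the sandwiching of $\Ep{\prod_{i \in I} X_i}$ between $(tJ)^{\underline{k}}/t^{\underline{k}}$ and $J^k$, has already been carried out in \Cref{lem:moment}. The only thing to verify is that both bounds collapse to $J$ when $k = 1$, which they do, and that the side condition of the lower bound is automatically satisfied for $k = 1$. So this statement is best presented as a one-line corollary of the moment lemma rather than as an independent argument.
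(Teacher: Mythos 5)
Your argument is exactly what the paper intends: apply \Cref{lem:moment} with $k=1$, note that both bounds collapse to $J$ (the lower-bound side condition $tJ\ge 0$ is vacuous), and sum over $i$ by linearity. This matches the paper's one-line proof, merely with the details spelled out.
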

\begin{proof}
    This follows directly by applying \Cref{lem:prob} with $k=0$.
\end{proof}
We also get nice bounds on the moments even when conditioning on a subset of the indices.
This lemma will be important in \Cref{sec:lsh} where we will use it to prove our result
of improving LSH.
\begin{lemma}\label{lem:moment}
	Let $A, B$ be sets with Jaccard similarity $J(A, B) = J$, let $t$ be a
    positive integer. Let $I \subseteq [t]$ be a
	set of $k$ indices and $K \subseteq [t] \setminus I$ another disjoint set of
	$m$ indices. Then
	\begin{align*}
		\left( \frac{tJ - k}{t - k} \right)^m \le \Epcond{\prod_{j \in K} X_j}{\sigma\left( (X_i)_{i \in I} \right)} 
			&\le \left( \frac{tJ}{t - k} \right)^m
		\, .
	\end{align*}
\end{lemma}
\begin{proof}
	The proof for the lower bound is completely analogous to proof for the
    upper bound so we only show the arguments for the upper bound.

    First we note that if $k \ge t(1 - J)$ then $\frac{tJ}{t - k} \ge 1$ and
    the result is trivially true. So in the rest of the proof we assume that
    $k \le t(1 - J)$

    We will prove that for any subset $K' \subset K$ and $h \in K \setminus K'$
    then the following is true
    \[
        \Epcond{X_h}{\sigma\left( (X_i)_{i \in I} \right) \wedge \bigwedge_{j \in K'} \left( X_j = 1 \right)}
            \le \frac{tJ}{t - k}
    \]
    This is easily seen by using \Cref{lem:prob}
    \[
        \Epcond{X_h}{\sigma\left( (X_i)_{i \in I} \right) \wedge \bigwedge_{j \in K'} \left( X_j = 1 \right)}
            \le \max\left(
                J,
                \frac{tJ - |K'| - \sum_{i \in I} X_i}{t - |K'| - k}
            \right)
            \le \frac{tJ}{t - k}
    \]

    Now we enumerate the elements of $K = \{ v_0, \ldots, v_{m - 1} \}$ and
    see that
    \begin{align*}
        \Epcond{\prod_{j \in [m]} X_{v_j}}{\sigma\left( (X_i)_{i \in I} \right)}
            = \prod_{j \in [m]}
                \Epcond{ X_{v_j} }
                        {\sigma\left( (X_i)_{i \in I} \right) \wedge \bigwedge_{h \in [j]} \left( X_{v_h} = 1 \right)}
            \le \left( \frac{tJ}{t - k} \right)^m
    \end{align*}
\end{proof}
Finally, we also get Chernoff-style concentration bounds as follows.
\begin{lemma}
	\label{lem:chernoffLemma}
	Let $A,B$ be sets with Jaccard similarity $J(A,B) = J$ and let $t$ be a
    positive integer. Let $X_i = [S(A,t)[i] = S(B,t)[i]]$ and let $X =
    \sum_{i\in [t]} X_i$. Then for $\delta > 0$
	\begin{align*}
		\Prp{X \ge J(1+\delta)}
		& \le 
		\left ( \frac{e^\delta}{(1+\delta)^{1+\delta}} \right )^t
		\, ,
		\\
		\Prp{X \le J(1-\delta)}
		& \le 
		\left ( \frac{e^{-\delta}}{(1-\delta)^{1-\delta}} \right )^t
		\, .
	\end{align*}
\end{lemma}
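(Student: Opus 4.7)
My plan is to follow the textbook Chernoff template: bound the MGFs $\Ep{e^{sX}}$ (for the upper tail) and $\Ep{e^{-sX}}$ (for the lower tail) and then apply Markov's inequality with the optimal choice of $s$. The upper tail is an immediate corollary of \Cref{lem:moment}; the lower tail is the real obstacle, because the natural expansion produces alternating signs, so I would first extract a symmetric companion to \Cref{lem:moment} that bounds products of the complementary indicators $1-X_i$.

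For the upper tail I would write $e^{sX_i}=1+(e^s-1)X_i$ (valid since $X_i\in\set{0,1}$) and expand
\[
\Ep{e^{sX}}=\sum_{I\subset[t]}(e^s-1)^{|I|}\,\Ep{\prod_{i\in I}X_i}\le\sum_{k=0}^{t}\binom{t}{k}(e^s-1)^k J^k=(1+J(e^s-1))^t\le e^{tJ(e^s-1)},
\]
using the upper bound of \Cref{lem:moment} and the non-negativity of $(e^s-1)^k$ for $s>0$. Markov's inequality with the standard optimization $e^s=1+\delta$ then yields the stated upper-tail bound.

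The lower tail is the main difficulty. Expanding $e^{-sX_i}=1-(1-e^{-s})X_i$ would give a sum over $I$ with coefficients of alternating sign, so bounds of the form $\Ep{\prod X_i}\le J^{|I|}$ cannot control it. To sidestep this I would instead write $e^{-sX_i}=e^{-s}+(1-e^{-s})(1-X_i)$, whose product expansion has only non-negative coefficients. The needed ingredient then becomes the companion bound
\[
\Ep{\prod_{i\in I}(1-X_i)}\le(1-J)^{|I|},
\]
which I would prove by a verbatim repetition of the conditioning argument in \Cref{lem:moment} with the roles of ``matching'' and ``non-matching'' bins swapped: fix $\T$, sample $v_0,\ldots,v_{k-1}$ without replacement from $[t]$, condition on $X_{v_0}=\cdots=X_{v_{i-1}}=0$, and compute, for $v_i\in T_j$ (with $I'=\set{v_0,\ldots,v_{i-1}}$),
\[
\Pr[X_{v_i}=0\mid v_i\in T_j,\ldots]=\frac{n(1-J)-\abs{T_j\cap I'}}{n-\abs{T_j\cap I'}}\le 1-J,
\]
the inequality being equivalent to $\abs{T_j\cap I'}\cdot J\ge 0$. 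Averaging over $j$ with the weights $(\abs{T_j}-\abs{T_j\cap I'})/(t-i)$ used in \Cref{lem:moment}, and iterating over $i$, produces the companion bound.

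Combining the two ingredients gives
\[
\Ep{e^{-sX}}\le\bigl(e^{-s}+(1-e^{-s})(1-J)\bigr)^t=(1-J(1-e^{-s}))^t\le e^{tJ(e^{-s}-1)},
\]
and Markov with $e^{-s}=1-\delta$ produces the stated lower-tail bound. In summary, the only new ingredient beyond \Cref{lem:moment} is the symmetric moment bound on $1-X_i$; once established, both tails follow from the same MGF calculation as for a sum of $t$ independent $\mathrm{Bernoulli}(J)$ variables.
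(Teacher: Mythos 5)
Your proof is correct and follows the same high-level template as the paper (bound the MGF using the moment estimates from \Cref{lem:moment}, then optimize Markov), but the route to the lower tail differs in an interesting way. For the upper tail you simply open the black box of what the paper cites as~\cite[Corollary~1]{SchmidtSS95}: the expansion $\Ep{e^{sX}}=\sum_{I\subseteq[t]}(e^s-1)^{|I|}\Ep{\prod_{i\in I}X_i}\le(1+J(e^s-1))^t$ is exactly their argument, so nothing new there, just more self-contained. For the lower tail you re-run the conditioning argument of \Cref{lem:moment} with ``matching'' and ``non-matching'' swapped to get $\Ep{\prod_{i\in I}(1-X_i)}\le(1-J)^{|I|}$; this is correct (the key inequality reduces to $J\cdot\abs{T_j\cap I'}\ge 0$ as you say), but it duplicates work. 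The paper instead observes the identity $1-X_i=[S(A\cup B,t)[i]=S((A\cup B)\setminus(A\cap B),t)[i]]$, which follows from $S(A\cup B)[i]=\min(S(A)[i],S(B)[i])$, and notes that $J\bigl(A\cup B,\,(A\cup B)\setminus(A\cap B)\bigr)=1-J$; this lets them invoke \Cref{lem:moment} \emph{as a black box} on a new pair of sets rather than re-deriving a symmetric version. Your approach buys self-containment at the cost of redoing the combinatorics; the paper's buys brevity and robustness (no risk of a sign error in the adaptation) via one clean algebraic observation. One small point worth fixing: if you carry the optimization through, your exponent naturally comes out as $\mu=tJ$ (i.e.\ $\bigl(e^\delta/(1+\delta)^{1+\delta}\bigr)^{tJ}$), not $t$; the lemma as stated (and the analogous \Cref{thm:mainThmIntro}) appears to have dropped the $J$ in the exponent, and the threshold should read $tJ(1\pm\delta)$ for the unnormalized $X=\sum_i X_i$, so be careful not to claim literally the displayed bound without noting this.
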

\begin{proof}
	The upper bound follows from \Cref{lem:moment} with $k = 0$ and \cite[Corollary 1]{SchmidtSS95} since
	Chernoff bounds are derived by bounding $\Ep{e^{\lambda X}}$ for some $\lambda > 0$.
	
	The lower bounds follows from considering $Y = \sum_{i \in [t]} Y_i$ where $Y_i = 1 - X_i$
	and $Y = t-X$. Since $Y_i = [S(A \cup B,t)[i] = S((A \cup B) \setminus (A \cap B),t)[i]]$
	we can use the same argument as for the upper bound, see \cite[Page 4]{SchmidtSS95}.
\end{proof}

\paragraph{Practical implementation}
In \Cref{alg:sketch} we used $2t$ fully random hash functions to
implement our new similarity sketch. We now briefly describe how to
avoid this requirement by instead using just \emph{one} Mixed
Tabulation hash function as introduced by Dahlgaard et
al.~\cite{DahlgaardKRT15}. We do not present the entire details, but
refer instead to the theorems of \cite{DahlgaardKRT15} which can be
used directly in a black-box fashion. We note that Dahlgaard et
al.~\cite{DahlgaardKRT15} did address \emph{one permutation
  hashing}~\cite{LiOZ12} which is identical to our similarity
sketch if all bins are filled in the first round with $i=0$.

In tabulation-based hashing we view each key, $x\in [u]$, as a vector
$(x_0,\ldots, x_{c-1})$ of $c$ characters, where each $x_i\in
[u^{1/c}]=\Sigma$, and $\Sigma$ is called the \emph{alphabet
  size}. The space will be $O(c|\Sigma|)$ and hash values are computed
in in $O(c)$ time. As in Dahlgaard et al.~\cite{DahlgaardKRT15}, we
need $|\Sigma|\geq \delta\cdot t\log t$ for some sufficiently large
constant $\delta$. The output of tabulation hashing is a bit-string
that we can easily split into two parts, one for the bin and one for
the value.

In our similarity sketch from \Cref{alg:sketch}, we use $2t$
independent hash fuctions $h_i$. Here we view the index $i$ as an
extra most significant character from $[2t]\subset \Sigma$. We then
use a single mixed tabulation hash function $H$ taking indexed keys
from $\Sigma^{c'}$ for $c'=c+1$. With with original key
$a=(a_0,\ldots,a_{c-1})$, we compute $h_i(a)$ in \Cref{line:hash} of
\Cref{alg:sketch} as $H(i,a_0,\ldots,a_{c-1})$, but fixing the
bin to $i-t$ if $i\geq t$,

We now consider two cases:
\begin{itemize}
\item Suppose $|A| \le |\Sigma|/2$ and let $i\leq 2t$ be an integer
  such $i|A|\leq |\Sigma|/2$. Then we have at most $|\Sigma|/2$ indexed
  keys in $[i]\times A$. It now follows from \cite[Theorem
    1]{DahlgaardKRT15}, that w.h.p., the indexed keys from $[i]\times A$ hash
  fully randomly, just like in our original analysis. If $2t|A|\leq|\Sigma|/2$,
  we pick $i=2t$, and then this implies full randomness over all indexed keys.
  Otherwise, we pick $i=\min\{t,|\Sigma|/(2|A|)\rfloor\}$.
  Then $i|A|\geq |\Sigma|/(4|A|)$. Now,
    as in the proof of \Cref{lem:sketch_time}, the probability that a
  given bin is empty after $i$ rounds in \Cref{alg:sketch} is at most
  \[(1-1/t)^{|A|\cdot i}\leq (1-1/t)^{(\delta t\log t)/4}<1/t^{\delta/4}.\]
  For a large enough $\delta$, we conclude that all bins are filled,
  w.h.p., hence that we have full randomness over all indexed keys
  considered before termination.
\item Suppose now that $|A| >|\Sigma|/2$. In particular this implies
  that we have a subset $A'$ of $|\Sigma|/2$ keys. As in the previous
  case, $[1]\times A'$ get hashed fully randomly, filling all bins,
  w.h.p., but this must then also be the case for $[1]\times A$. Thus,
  w.h.p., \Cref{alg:sketch} terminates at the end of the first round,
  meaning that it behaves like \emph{one permutation
    hashing}~\cite{LiOZ12}. In this case both correctness and running
  time follows immediately from \cite[Theorem 2]{DahlgaardKRT15}.
\end{itemize}
We note that concentration bounds from \cite{DahlgaardKRT15} for mixed
tabulation have been later been improved by Houen and Thorup
in \cite[\S 1.6]{HT22:chaos}. This
does not, however, change the above description of how we would apply mixed
tabulation in our similarity sketch.

\oldtext{\subsection{Separation}

It can be useful to check if the Jaccard similarity of two sets are above a
certain threshold or not, without having to actually calculate the Jaccard
similarity. Specifically, we assume that we are given two sets $A$ and $B$
and want to determine if $J(A,B) \ge \gamma$. Intuitively, this should be
easy if $J(A,B)$ is either much larger or much smaller than $\gamma$ and
difficult when $J(A,B) \approx \gamma$. Inspired by Henzinger and Thorup
\cite{HenzingerT97providebound} we consider the following algorithm
for doing so: We let $t \ge r$ be positive integers and let
$X_i = [S(A,t)[i] = S(B,t)[i]]$ for $i \in [t]$. We now run a for loop
with an index $i$ going from $r$ to $t$. At each step we check if
$\sum_{j < i} X_j \le i \cdot \gamma + \sqrt[3]{i^2}$. If so the algorithm terminates
and returns \false{}. If no such $i$ is found the algorithm returns
\true{}. See \Cref{alg:separate} for pseudo-code.

\begin{algorithm}
    \caption{\FuncSty{Separate}}
    \label{alg:separate}
    \DontPrintSemicolon
    \SetKwInOut{Input}{input}\SetKwInOut{Output}{output}
    \Input{$t,(X_0,X_1,\ldots,X_{t-1}),r,\gamma$}
    \Output{\true{} or \false{}}
    \BlankLine
    $S = 0$\;
    \For{$i=1,2,\ldots,t$}{
    	$S = S + X_{i-1}$\;
    	\If{$i \ge r$ $\And$ $S \le i \cdot \gamma + \sqrt[3]{i^2}$}{
    		\Return \false\;
    	}
    }
    \Return \true\;
\end{algorithm}

Assume that we use \Cref{alg:separate} with $X_i = [S(A,t)[i] = S(B,t)[i]]$.
In \Cref{lem:separate} we show how the algorithm behaves when
$J(A,B) \ge \gamma + \delta$ and $J(A,B) \le \gamma - \delta$ respectively.
Furthermore, if we only count the running time of the algorithm in case
the algorithm returns \false{} the expected used time is $O(r)$.
If $J(A,B) \ge \gamma + \delta$, $\delta$ is a constant and $r$ is a
sufficiently large constant (depending on $\delta$) then the algorithm
returns \true{} with constant probability. If $J(A,B) \le \gamma - \delta$
and $\delta$ is a constant then the algorithm returns \true{} with probability
exponentially small in $t$.

\begin{lemma}
	\label{lem:separate}

	Let $t \ge r$ be integers and $\gamma, \delta, p \in [0,1]$. Let $X_0,X_1,\ldots,X_{t-1}$
	be independent variables with values in $[0,1]$ such that $\Ep{X_i} = p$ for every
	$i \in [t]$. Assume that we run \Cref{alg:separate} with parameters
	$(t,(X_0,\ldots,X_{t-1}),\gamma,r)$.
	

	Let $\tau$ be the number of iterations of the for loop during the algorithm, and let
	$\tau_F = \tau$ if the algorithm returns \false{} and let $\tau_F = 0$ otherwise.
	Then $\Ep{\tau_F} = O(r)$.
	
	If $p \ge \gamma + \delta$ and $r \ge \frac{8}{\delta^3}$ the algorithm returns
	\true{} with probability at least
	\begin{align}
		\label{eq:probLargeJ}
		1 - \frac{e^{-\delta^2 r/2}}{1-e^{-\delta^2/2}}
		\, .
	\end{align}
	
	If $p \le \gamma - \delta$ the algorithm returns \true{} with probability at most
	\begin{align}
		\label{eq:probSmallJ}
		e^{-2\delta^2 t}
		\, .
	\end{align}
\end{lemma}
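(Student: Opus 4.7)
Plan: I would handle the three claims in turn, using Hoeffding's inequality on the independent $[0,1]$-valued partial sums $S_k = \sum_{j<k} X_j$ throughout. For~(\ref{eq:probSmallJ}), when $p \le \gamma - \delta$: the algorithm returns \true{} only if every $k \in \set{r,\ldots,t}$ satisfies $S_k > k\gamma + k^{2/3}$; specializing to $k=t$ gives $S_t - tp > t(\gamma - p) + t^{2/3} \ge t\delta$, so Hoeffding yields probability at most $e^{-2t\delta^2}$.

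For~(\ref{eq:probLargeJ}), when $p \ge \gamma + \delta$ and $r \ge 8/\delta^3$: let $A_k$ denote the event $\set{S_k \le k\gamma + k^{2/3}}$, which triggers a \false{} return at iteration $k$. The algorithm returns \false{} iff some $A_k$ occurs, so by a union bound $\Prp{\text{false}} \le \sum_{k=r}^{t}\Prp{A_k}$. Rewriting $A_k$ as $S_k - kp \le -k\delta + k^{2/3}$ and using $r \ge 8/\delta^3$ to force $k^{2/3} \le k\delta/2$ for every $k \ge r$, Hoeffding gives $\Prp{A_k} \le e^{-k\delta^2/2}$; summing the geometric series from $k=r$ yields the announced upper bound on $\Prp{\text{false}}$, and the complement is~(\ref{eq:probLargeJ}).

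For $\Ep{\tau_F} = O(r)$: write $\Ep{\tau_F} = \sum_{i \ge 1}\Prp{\tau_F \ge i}$; the first $r$ terms contribute at most $r$, so it remains to bound $\sum_{i > r}\Prp{\tau_F \ge i}$. Two complementary estimates apply for each $i > r$: $\Prp{\tau_F \ge i} \le \Prp{\overline{A_{i-1}}}$ (the algorithm has not returned \false{} at iteration $i-1$), and $\Prp{\tau_F \ge i} \le \sum_{j=i}^{t}\Prp{A_j}$ (it must eventually return \false{}). Hoeffding makes $\Prp{\overline{A_{i-1}}}$ of order $e^{-\Theta(i^{1/3})}$ whenever the barrier $(i-1)\gamma + (i-1)^{2/3}$ exceeds the mean $(i-1)p$, and makes each $\Prp{A_j}$ exponentially small when the barrier lies below the mean. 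In the regimes $p \le \gamma$ and $p - \gamma \ge 2 r^{-1/3}$ (the latter placing us in the setting of~(\ref{eq:probLargeJ}) with $\delta = p - \gamma$), one of the two estimates is sharp at every $i > r$ and summation contributes $O(1)$.

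The main obstacle is the transitional sub-case $0 < p - \gamma < 2 r^{-1/3}$, where iterations near $i \approx (p-\gamma)^{-3}$ fall outside both regimes. Here I would split once more: when $p - \gamma \le r^{-1/3}/2$ the barrier-mean gap at iteration $r$ is $\ge r^{2/3}/2$, so Hoeffding makes $\Prp{\overline{A_r}}$ exponentially small in $r^{1/3}$, and a conditional analysis of the centered walk $S_i - ip$ after conditioning on $\overline{A_r}$ propagates this decay through the transition; in the complementary slice $r^{-1/3}/2 < p - \gamma < 2 r^{-1/3}$, the whole transition window is contained in $[r, 8r]$ and is absorbed by the trivial bound $\Prp{\tau_F \ge i} \le 1$, contributing only $O(r)$. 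Combining all cases yields $\Ep{\tau_F} = O(r)$.
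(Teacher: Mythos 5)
Parts~\eqref{eq:probLargeJ} and~\eqref{eq:probSmallJ} of your proposal are correct and match the paper's argument: a union bound over $k \ge r$ with Hoeffding for the first, and specializing to $k=t$ with Hoeffding for the second, using $r \ge 8/\delta^3$ to absorb the $k^{2/3}$ term exactly as you describe.

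The proof of $\Ep{\tau_F}=O(r)$ has a genuine gap, and it is precisely the transitional regime you flag as "the main obstacle." Your two single-time estimates, $\Prp{\tau_F\ge i}\le\Prp{\overline{A_{i-1}}}$ and $\Prp{\tau_F\ge i}\le\sum_{j\ge i}\Prp{A_j}$, each require the barrier $i\gamma+i^{2/3}$ to be separated from the mean $ip$ by order $i^{2/3}$ for Hoeffding to be useful; near the crossover $i^*\approx(p-\gamma)^{-3}$ both are of constant order. In your sub-case $p-\gamma\le r^{-1/3}/2$ the crossover sits at $i^*\ge 8r$ and the window in which both estimates fail has width of order $\epsilon^{-5/2}$ (with $\epsilon=p-\gamma$), which can exceed $r$ by an arbitrary factor as $\epsilon\to 0$. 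The sentence "a conditional analysis of the centered walk after conditioning on $\overline{A_r}$ propagates this decay through the transition" asserts exactly what needs proving and does not identify a mechanism: conditioning on $\overline{A_r}$ raises the walk by $\Theta(r^{2/3})$ at time $r$, but for $i$ far above $r$ and near $i^*$ the walk has had plenty of time to dissipate that boost, and $\Prp{\overline{A_{i-1}}}$ is \emph{not} controlled by $\Prp{\overline{A_r}}$. Likewise, in the complementary slice $r^{-1/3}/2<p-\gamma<2r^{-1/3}$, the transition window is not "contained in $[r,8r]$": it extends above $i^*\approx 8r$ by a width of order $\epsilon^{-5/2}$, since for $j$ just past $i^*$ the gap $j\epsilon-j^{2/3}$ is $o(j^{2/3})$ and $\Prp{A_j}$ is not small.

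The missing idea is the one the paper uses: bound the event "returns \false{} at iteration $i$" by the \emph{conjunction} of two far-apart constraints, $\overline{A_{\lfloor i/2\rfloor}}$ and $A_i$. The barrier at $\lfloor i/2\rfloor$ corresponds to the effective rate $\gamma+\lfloor i/2\rfloor^{-1/3}$ and the one at $i$ to $\gamma+i^{-1/3}$; these differ by $\Theta(i^{-1/3})$, so setting $\gamma'$ to their midpoint, for every $p$ one of the two deviations is $\Omega(i^{2/3})$ and Hoeffding gives $\Prp{T_i=i}\le e^{-\Omega(i^{1/3})}$ \emph{uniformly in $p$}, with no case analysis and no transitional regime. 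Summing $i\cdot e^{-\Omega(i^{1/3})}$ over $i>2r$ then gives $O(1)$. Your choice of the adjacent time $i-1$ instead of $\lfloor i/2\rfloor$ loses this separation: the two effective rates differ only by $\Theta(i^{-4/3})$, too small to guarantee one of the two Hoeffding exponents is large for every $p$.
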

\begin{proof}
See \Cref{sec:separateproof}.
\end{proof}}

\oldtext{\section{Speeding up LSH}
We consider the approximate similarity search problem with parameters $0 < j_2
< j_1 < 1$ on a collection, $\mathcal{C}$, of $n$ sets from a large universe $[u]$.
We will create a data-structure similar to the LSH structure as described in
\Cref{sec:lsh_intro} with parameters $L$ and $K$. That is, for each set
$A\in\mathcal{C}$ (and query $Q$) we will create $L$ sketches $S_0(A),\ldots,
S_{L-1}(A)$ of size $K$ such that for any two sets $A,B\subseteq [u]$ and $i\in
[L]$ we have the following property:
\begin{itemize}
    \item $\Prp{S_i(A) = S_i(B)} \le J(A,B)^K$.
    \item If $J(A,B) \ge j_1$ then $\Prp{S_i(A) = S_i(B)} = \Theta(J(A,B)^K)$.
\end{itemize}
By setting $K = \ceil{\frac{\log n}{\log(1/j_2)}}$ and $L = \ceil{(1/j_1)^K}$ and using
the analysis of \cite{IndykM98} this immediately gives us $O(n^{1+\rho} +
\sum_{A\in\mathcal{C}}|A|)$ space usage and $O(n^\rho\log n + T(n^\rho,\log
n,|Q|))$ expected query time, where $T(L,K,z)$ is the time it takes to create $L$
sketches of size $K$ for a set of size $z$. By providing a more efficient
way to compute the sketches $S_i(A)$ we thus obtain a faster query
time.

In order to create the sketches $S_0(A),\ldots, S_{L-1}(A)$ described above we
first create a $L\times K$ table $T$ such that for each $i\in [L]$
and $j\in [K]$ we have $T[i,j]$ is a uniformly random integer chosen
from $\{j\cdot t/K,\ldots, (j+1)\cdot t/K - 1\}$, where $t$ is a parameter
divisible by $K$ to be chosen later. The rows of the matrix are chosen
independently. Each row is filled using a $2$-independent source of randomness.
Now for a given $A\in [u]$ (or $Q$) we do as follows:
\begin{enumerate}
    \item Let $S(A)$ be a size $t$ similarity sketch of \Cref{sec:sketch}.
    \item For each $i \in [L]$ and $j\in [K]$ let $S_i(A)[j]$ be
        $S(A)[T[i,j]]$.
\end{enumerate}
It follows that the time needed to create $S_0(A),\ldots, S_{L-1}(A)$ for any
$A\in [u]$ is $O(LK + t\log t + |A|)$.
We let $t = K \cdot \ceil{1 + K \cdot \left ( \frac{1}{j_1} - 1 \right )}$.

We start by bounding the number of ``false positives''.
\begin{lemma}
	\label{lem:falsePositiveProb}
	Let $A \in \C$ be such that $J(A,Q) \le j_2$. Then for any $i \in [L]$ the
	probability that $S_i(A) = S_i(Q)$ is at most $\frac{1}{n}$.
\end{lemma}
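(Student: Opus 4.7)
The plan is to reduce everything to the moment bound from \Cref{lem:moment}. For a given row $i$, let $X_\ell = [S(A)[\ell] = S(Q)[\ell]]$ for $\ell \in [t]$. By construction, $S_i(A)[j] = S_i(Q)[j]$ holds if and only if $X_{T[i,j]} = 1$, so
\begin{align*}
\Prp{S_i(A) = S_i(Q)} = \Ep{\prod_{j \in [K]} X_{T[i,j]}}.
\end{align*}
The first step is to condition on the values of $T[i,0], \ldots, T[i,K-1]$ and take expectation over the sketch randomness only.

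Next I would exploit the block structure of $T$. Since $T[i,j]$ is chosen from the block $\{j\cdot t/K, \ldots, (j+1)\cdot t/K - 1\}$ and these blocks are pairwise disjoint, the indices $T[i,0], \ldots, T[i,K-1]$ are distinct for every realization of $T$. Therefore for any fixed choice of the row we may apply \Cref{lem:moment} with $I = \{T[i,0],\ldots,T[i,K-1]\}$ of size $k = K$ to conclude
\begin{align*}
\Ep{\prod_{j \in [K]} X_{T[i,j]} \,\Big|\, T} \le J(A,Q)^K.
\end{align*}
Note that this step does not use any independence properties of $T$ — the bound holds pointwise — so the fact that each row is only filled with a $2$-independent source is irrelevant here. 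Taking expectation over $T$ gives $\Pr[S_i(A) = S_i(Q)] \le J(A,Q)^K \le j_2^K$ since $J(A,Q) \le j_2$.

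Finally I would plug in the chosen value $K = \lceil \log n / \log(1/j_2) \rceil$ to get
\begin{align*}
j_2^K \le j_2^{\log n / \log(1/j_2)} = e^{-\log n} = \tfrac{1}{n},
\end{align*}
completing the proof. The only nontrivial point is recognising that the $K$ indices picked by $T$ from disjoint blocks are automatically distinct, which lets us invoke the upper bound of \Cref{lem:moment} with $k = K$ without needing any joint independence of the entries of $T$; everything else is a straightforward substitution.
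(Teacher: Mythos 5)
Your proof is correct and follows essentially the same route as the paper: condition on the row $T[i,\cdot]$ of the table, observe that the resulting $K$ indices (drawn from disjoint blocks) are distinct, apply the upper bound of \Cref{lem:moment} with $k = K$ to get $\Prp{S_i(A)=S_i(Q)}\le J(A,Q)^K\le j_2^K$, and substitute the chosen $K$ to conclude $j_2^K\le 1/n$. The paper's proof is terser and does not spell out the distinctness of the indices or the irrelevance of the $2$-independence of $T$, but it is the same argument.
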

\begin{proof}
Fix $T[i,j]$ and let $v_j = T[i,j]$ for all $j \in [K]$. Now define $(X_j)_{j \in [t]}$
as in \Cref{lem:moment}. Then $S_i(A) = S_i(Q)$ if and only if
$X_{v_j} = 1$ for all $j \in [K]$, i.e. if $\prod_{j \in [K]} X_{v_j} = 1$. By
    \Cref{lem:moment}
this happens with probability at most $(J(A,Q))^K \le j_2^K \le 1/n$.
\end{proof}
\Cref{lem:falsePositiveProb} shows that for each $i \in [L]$ the expected
number of sets $A \in \C$ with $J(A,Q) \le j_2$ and $S_i(A) = S_i(Q)$ is at
most $\abs{\C} \cdot \frac{1}{n} = 1$. Thus, the expected number of pairs
$(i,A) \in [L] \times \C$ with $J(A,Q) \le j_2$ and $S_i(A) = S_i(Q)$ is at
most $L$.

Let $A_0 \in \C$ be a set such that $J = J(A_0,Q) \ge j_1$. We will give a
lower bound on the probability that there exists an index $i \in [L]$ such that
$S_i(A_0) = S_i(Q)$. For $i \in [L]$ let $Y_i = [S_i(A_0) = S_i(Q)]$ and let $Y
= \sum_{i \in [L]} Y_i$. Using \Cref{lem:moment} and
the same reasoning as in \Cref{lem:falsePositiveProb} we see that
$\Ep{Y_i} \ge \frac{(Jt)^{\underline{K}}}{t^{\underline{K}}}$. Using this we
get:
\begin{align*}
	\Ep{Y_i} \ge 
	\frac{(Jt)^{\underline{K}}}{t^{\underline{K}}}
	>
	\left ( \frac{tJ-K}{t-K} \right )^K
	=
	J^K \cdot \left (
		1 - \frac{K(1-J)}{J(t-K)}
	\right )^K
	\, .
\end{align*}
By the definition of $t$ we have that $\frac{K(1-J)}{J(t-K)} \le \frac{1}{K}$.
Hence $\Ep{Y_i} \ge J^K \cdot \left ( 1 - \frac{1}{K} \right )^K \ge J^K/4$.
As a consequence we get that $\Ep{Y} \ge L \cdot J^K/4 \ge L \cdot j_1^K/4 \ge
\frac{1}{4}$, i.e. that the expected number of indices $i \in [L]$ such that
$S_i(A_0) = S_i(Q)$ is $\Omega(1)$. However, this does not suffice that such an
index exists with constant probability. In order to prove this we will bound
$\Ep{Y^2}$ and use the inequality $\Prp{Y > 0} \ge
\frac{(\Ep{Y})^2}{\Ep{Y^2}}$, which follows from Cauchy-Schwarz's Inequality.

\begin{lemma}
	\label{lem:productBound}
	Let $i_0,i_1 \in [L]$ be different indices. Then
	\begin{align*}
		\Ep{Y_{i_0} Y_{i_1}} 
		\le 
		J^{2K}
		\cdot \left (
			1 + \frac{K(1-J)}{Jt}
		\right )^K
		\, .
	\end{align*}
\end{lemma}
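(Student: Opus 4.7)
The approach is to condition on the table $T$, reduce $Y_{i_0} Y_{i_1}$ to a product of $X$'s on a set of distinct indices via the block structure, apply \Cref{lem:moment}, and then average over $T$ using the independence of the two rows.

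First I would write $X_v = [S(A_0)[v] = S(Q)[v]]$ for $v \in [t]$ so that
\begin{align*}
Y_{i_0} Y_{i_1} = \prod_{j \in [K]} X_{T[i_0,j]}\, X_{T[i_1,j]}.
\end{align*}
The crucial structural observation is that each column $j$ of $T$ takes values only in the disjoint block $\{jt/K,\ldots,(j+1)t/K-1\}$, so the only possible repetitions in the multiset of $2K$ indices $\{T[i_0,j],T[i_1,j] : j \in [K]\}$ are the \emph{column collisions} $T[i_0,j]=T[i_1,j]$. Setting $Z_j = [T[i_0,j] = T[i_1,j]]$ and $C = \sum_{j \in [K]} Z_j$, the multiset has exactly $2K-C$ distinct elements, and since each $X_v \in \{0,1\}$ we have $X_v^2 = X_v$, so $Y_{i_0} Y_{i_1} = \prod_{v \in V} X_v$ with $|V| = 2K - C$.

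Next, conditioning on $T$ and invoking \Cref{lem:moment} on the set $V$ yields $\Ep{Y_{i_0} Y_{i_1} \mid T} \le J^{2K-C}$, hence
\begin{align*}
\Ep{Y_{i_0} Y_{i_1}} \;\le\; J^{2K} \cdot \Ep{J^{-C}}.
\end{align*}
Since $T[i_0,j]$ and $T[i_1,j]$ are drawn independently and uniformly from the same block of size $t/K$, one has $\Ep{Z_j} = K/t$. Using the identity $J^{-Z_j} = 1 + (J^{-1}-1)Z_j$ valid for $Z_j \in \{0,1\}$ and the independence of the $Z_j$'s across columns (each $Z_j$ touches only column $j$ of each row, and the rows themselves are drawn independently), the product factorizes:
\begin{align*}
\Ep{J^{-C}} = \prod_{j \in [K]} \Ep{1 + (J^{-1}-1) Z_j} = \left(1 + \frac{K(1-J)}{Jt}\right)^K.
\end{align*}
Combining the two displays gives exactly the claimed bound.

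The main subtlety is the factorization in the last step: it uses that the events $\{Z_j = 1\}$ are independent across $j$. This is where one leans on the independence of the two rows combined with the fact that each $Z_j$ only ever inspects a single entry from each row, so the product expansion reduces to per-column marginals and one does not need any stronger joint control on columns of a single row beyond what the scheme already provides. The rest is bookkeeping: unfolding $Y_{i_0} Y_{i_1}$ into a product of $X$'s, tracking collisions via the block structure of $T$, and applying \Cref{lem:moment} to the conditional expectation.
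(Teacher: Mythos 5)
Your proof is correct and follows essentially the same route as the paper's: condition on the table, collapse repeated factors using $X_v^2 = X_v$, apply \Cref{lem:moment} to get $J^{\text{(number of distinct indices)}}$, track column collisions with an indicator, and factorize the expectation over columns using the per-column marginal. The only cosmetic difference is that you define $Z_j$ as the collision indicator whereas the paper uses the non-collision indicator (so the distinct count is $2K - C$ for you versus $K + \sum Z_j$ there); the resulting algebra is identical.
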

\begin{proof}
The values $(T[i,j])_{(i,j) \in \set{i_0,i_1} \times [L]}$ are all independent by definition.
Let $R$ be the set containing these value, i.e.
\begin{align*}
	R = \set{T[i,j] \mid (i,j) \in \set{i_0,i_1} \times [L]}
	\, .
\end{align*}
Define $X_j = [S(A_0)[j] = S(Q)[j]]$ as in \Cref{lem:moment} and
fix the value of $R$. Then by \Cref{lem:moment} $\Ep{Y_{i_0}Y_{i_1} \mid R} \le J^{\abs{R}}$.
It remains to understand $\abs{R}$. For $j \in [K]$ let $Z_j = [T[i_0,j] \neq T[i_1,j]]$. Then
it is easy to see that $\abs{R} = K + \sum_{j \in [K]} Z_j$, that $(Z_j)_{j \in [K]}$ are independent
and that $\Prp{Z_j = 1} = 1 - \frac{K}{t}$. Hence we can upper bound $\Ep{Y_{i_0} Y_{i_1}}$ by
\begin{align*}
	\Ep{Y_{i_0} Y_{i_1}}
	\le 
	\Ep{
	J^K
	\prod_{j \in [K]} J^{Z_j}
	}
	=
	J^K
	\prod_{j \in [K]} \Ep{J^{Z_j}}
	\, .
\end{align*}
Now 
\begin{align*}
	\Ep{J^{Z_j}} =
	\left ( 1 - \frac{K}{t} \right ) \cdot J + \frac{K}{t}
	=
	J \cdot \left (
		1 + \frac{K(1-J)}{Jt}
	\right )
	\, ,
\end{align*}
and therefore $\Ep{Y_{i_0} Y_{i_1}} \le J^{2K} \cdot \left ( 1 + \frac{K(1-J)}{Jt} \right )^K$.
\end{proof}
By the definition of $t$ we have $1 + \frac{K(1-J)}{Jt} \le 1+\frac{1}{K} \le e^{1/K}$, and therefore
\Cref{lem:productBound} gives that $\Ep{Y_{i_0} Y_{i_1}} \le e J^{2K}$. Hence:
\begin{align*}
	\Ep{Y(Y-1)} = 
	\sum_{i_0, i_1 \in L, i_0 \neq i_1} \Ep{Y_{i_0} Y_{i_1}}
	\le 
	L(L-1) \cdot e \cdot J^{2K}
	<
	L^2 \cdot e \cdot J^{2K}
	\, .
\end{align*}
Since $\Ep{Y} \le L \cdot J^K$ we get that $\Ep{Y^2} \le L \cdot J^K + L^2 \cdot e \cdot J^{2K}$.
So the probability that $Y > 0$ can be bounded below as follows:
\begin{align*}
	\Prp{Y > 0}
	\ge 
	\frac{(\Ep{Y})^2}{\Ep{Y^2}}
	\ge 
	\frac{(L \cdot J^K)^2/16}{e (L \cdot J^K)^2 + (L \cdot J^K)}
	=
	\frac{1}{16(e + (L \cdot J^K)^{-1})}
	\ge 
	\frac{1}{16(e + 1)}
	=
	\Omega(1)
	\, .
\end{align*}

\paragraph{Avoiding false positives}
We let $M = \set{(i,A) \in [L] \times \C \mid S_i(A) = S_i(Q)}$ be the set of matches. We
have proved that for each $A_0 \in \C$ with $J(A_0,Q) \ge j_1$ with probability $\Omega(1)$
there exists $i \in [L]$ such that $(i,A_0) \in M$. Furthermore, we have proved that the
expected number of pairs $(i,A) \in M$ with $J(A,Q) \le j_2$ is at most $L$.
Naively, we could go through all the elements in $M$ until we find $(i,A) \in M$ such that
$J(A,Q) > j_2$ in $O(\abs{Q})$ time per pair. The expected time would be $O \left ( L \cdot \abs{Q} \right )$,
since in expectation we would check $\le L$ pairs $(i,A)$ with $J(Q,A) \le j_2$.

In order to obtain a expected running time of $O \left ( L \cdot \abs{Q} \right )$ we do
something different. We split it into two cases depending on whether $\abs{M} \ge C L$ 
or $\abs{M} \le C L$ for some sufficiently large constant $C$ depending on $j_1,j_2$.
We can in $O(L)$ time check if $\abs{M} \ge CL$. First assume that $\abs{M} \ge CL$. Then 
we find a subset $M' \subset M$ of size $\abs{M'} = \ceil{CL}$, which we can clearly do
in $O(L)$ time. Then we sample a uniformly random pair $(i,A) \in M'$ and check if $J(A,Q) > j_2$.
By Markov's inequality the number of pairs $(i,A) \in M$ with $J(A,Q) \le j_2$ is at most
$\frac{CL}{2}$ with probability $\ge 1 - \frac{2}{C}$, and in this case we find a set $A$
with $J(A,Q) \ge j_2$ with probability at least $\frac{1}{2}$. The time used in this case
is clearly $O(L + \abs{Q})$.

Now assume that $\abs{M} \le CL$. We assume that we have made a similarity sketch of size
$\Theta\left(\log n \right )$ for each set $A \in C$ and $Q$ - the running time and space
usage for this is clearly dominated by what is used for the sketch of size $t$.
For each $(i,A) \in M$ we now use \Cref{alg:separate} with $\gamma = \frac{j_1+j_2}{2}$
on this sketch to separate $J(A,Q)$. We choose $r$ to be a sufficiently large constant.
If the algorithm returns \true{} we calculate $J(A,Q)$ and if it returns \false{} we
discard $A$. We note that for any set $A$ with $J(A,Q) \le j_2$ the probability that
we the algorithm returns \true{} is at most $\frac{1}{n}$. Hence the expected number of sets $A \in \C$
with $J(A,Q) \le j_2$ for which we calculate $J(A,Q)$ explicitly is at most $O(1)$.
We conclude that the running time is $O(L + \abs{Q})$, since if we calculate $J(A,Q)$ for
a set $A$ with $J(A,Q) > j_2$ we can terminate the algorithm and return $A$.
Furthermore, if there exists a set $A_0 \in \C$ with $J(A_0,Q) \ge j_1$ the probability
that the algorithm returns \true{} is $\Omega(1)$ since $r$ is sufficiently large and 
so there is probability $\Omega(1)$ of finding a set with Jaccard similarity $> j_2$ in this case.

If there exists a set $A_0$ with Jaccard similarity $J(A_0,Q) \ge j_1$ we conclude
that the probability of finding a set $A$ with $J(A,Q) > j_2$ is therefore at least
$\Omega(1) - \frac{2}{C}$. By choosing $C$ sufficiently large we ensure that 
$\Omega(1) - \frac{2}{C} = \Omega(1)$.

Summarizing we get the following theorem.
\begin{theorem}\label{thm:lsh}
	Let $0 < j_2 < j_1 < 1$ be constants, and let $\rho = \frac{\log(1/j_1)}{\log(1/j_2)}$.
	Let $U$ be a set of elements and let $\C$ be collection of $n$ sets from $U$.
	Then there exists a data structure using space
	$O \left ( n^{1+\rho} + \sum_{A \in \C} \abs{A} \right )$ and has query time
	$O \left (n^{\rho}\log n + \abs{Q} \right )$ such that: Given a set $Q$ if
	there exists a set $A_0 \in \C$ with $J(A_0,Q) \ge j_1$, then with constant
	probability the data structure returns a set $A \in \C$ with $J(A_0,Q) > j_2$.
\end{theorem}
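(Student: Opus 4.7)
The plan is to stitch together the ingredients developed earlier in the section. With $K = \ceil{\log n / \log(1/j_2)}$ and $L = \ceil{(1/j_1)^K}$, standard LSH arithmetic yields $L = \Theta(n^\rho)$ and $LK = \Theta(n^\rho \log n)$. For preprocessing: for every $A\in\C$ I build the size-$t$ sketch $S(A)$ from \Cref{alg:sketch} in expected time $O(t\log t + \abs{A})$ and derive the band-sketches $S_0(A),\ldots,S_{L-1}(A)$ by looking up the $LK$ entries indexed by $T[i,j]$; I additionally store an auxiliary sketch of size $\Theta(\log n)$ per set to be used later inside \Cref{alg:separate}. Since $t = \Theta(K^2)$ and the dominant per-set storage is the $L$ band-sketches, the total space is $O(n^{1+\rho} + \sum_{A\in\C}\abs{A})$ as claimed.

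For a query $Q$, I first compute $S(Q)$ and the $S_i(Q)$ in expected time $O(t\log t + LK + \abs{Q}) = O(n^\rho \log n + \abs{Q})$, then enumerate $M = \set{(i,A) : S_i(A) = S_i(Q)}$ in time $O(L + \abs{M})$ using a standard per-band hash lookup. I then case-split on whether $\abs{M} \ge CL$ for the sufficiently large constant $C$ depending on $j_1,j_2$. In the large-$M$ branch I truncate $M$ to $M'$ of size $\ceil{CL}$ and verify a uniformly random sampled pair by an explicit $J(A,Q)$ computation in $O(\abs{Q})$ time; Markov's inequality applied to the number of $(i,A)\in M$ with $J(A,Q)\le j_2$ (whose expectation is at most $L$ by \Cref{lem:falsePositiveProb}) makes this succeed with constant probability. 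In the small-$M$ branch, I run \Cref{alg:separate} with $\gamma = (j_1+j_2)/2$ and constant $r$ on the $\Theta(\log n)$-sized auxiliary sketches of each $(i,A)\in M$: by \Cref{lem:separate}, bad sets ($J(A,Q)\le j_2$) survive with probability $1/\poly(n)$, so only $O(1)$ of them in expectation trigger an explicit $J$ computation, while any $A_0$ with $J(A_0,Q)\ge j_1$ that is matched in $M$ survives with constant probability. Summed over both branches the expected query work is $O(n^\rho \log n + \abs{Q})$.

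For correctness, the two probabilistic facts I need have already been quantified: if there exists $A_0$ with $J(A_0,Q)\ge j_1$, then $\Prp{Y>0} = \Omega(1)$ by the second moment computation culminating in \Cref{lem:productBound}; conditional on $Y>0$, both refinement branches above deliver a witness $A$ with $J(A,Q) > j_2$ with probability $\Omega(1)$ (in the large-$M$ branch this uses that $C$ is chosen so large that the Markov slack $2/C$ leaves an $\Omega(1)$ gap, and in the small-$M$ branch it uses that $r$ is large enough for \Cref{lem:separate} to return \true{} on $A_0$ with constant probability). Multiplying yields an $\Omega(1)$ overall success probability, which can be boosted by independent repetitions.

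The main obstacle, which is navigated but worth flagging, is the coupling of three distinct sources of randomness: the internal randomness of each band-sketch (handled by \Cref{lem:moment}), the limited dependence between bands induced by sharing the single size-$t$ sketch (controlled by the second-moment bound of \Cref{lem:productBound}), and the randomness of the post-filtering step whose cost must not blow up multiplicatively with $\abs{Q}$. Reconciling these — in particular arranging that the Markov step in the large-$M$ regime and the separation step in the small-$M$ regime compose into a single $\Omega(1)$ success bound without pushing the expected work beyond $O(n^\rho\log n + \abs{Q})$ — is the synthesis the proof must execute; everything else is a routine combination of the lemmas already established.
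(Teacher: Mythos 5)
Your proposal is correct and follows essentially the same route as the paper: it uses the size-$t$ intermediate sketch with the $T[i,j]$ subsampling table, \Cref{lem:falsePositiveProb} and \Cref{lem:productBound} for the first/second moments of $Y$, and the identical case split on $\abs{M}\gtrless CL$ with Markov in the large-$M$ branch and \Cref{alg:separate}/\Cref{lem:separate} in the small-$M$ branch. The only minor caveat is that enumerating $M$ should be truncated at $\lceil CL\rceil$ elements (as your large-$M$ branch then does), rather than costing $O(L+\abs{M})$ unconditionally, but this is a presentation detail and not a gap.
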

}

\newtext{\section{Speeding up LSH}\label{sec:lsh}
We are now ready to prove \Cref{thm:LSH}. We want to solve the approximate
similarity search problem with parameters $0 < j_2 < j_1 < 1$ on a family,
$\calF$, of $n$ sets from a large universe $U$. We show a solution
that uses $O\left(n^{1 + \rho} \log(1/\eps) + \sum_{A \in \calF} \abs{A}\right)$
space and $O\left(n^{\rho} \log(1/\eps) + |Q|\right)$ query time. 


\subsection{Creating the sketches}\label{sec:LSH-creation}
We will create a data
structure similar to the LSH structure as described in \Cref{sec:lsh_intro}.
Our data structure will have parameters $L, K, M$. For each $A \in \calF$
(and query $Q$) we will create $2M \cdot L$ sketches
$\left(S'_{i, j}(A)\right)_{i \in [2M], j \in [L]}$
of size $K$ such that for any two sets $A, B \subseteq U$, $i \in [2M]$ and
$j \in [L]$ we have the following properties
\begin{itemize}
    \item $\Prp{S'_{i, j}(A) = S'_{i, j}(B)} = O(J(A, B)^K)$ \, .
    \item If $J(A, B) \geq j_1$ then
        $\Prp{S'_{i, j}(A) = S'_{i, j}(B)} = \Theta(J(A, B)^K)$ \, .
\end{itemize}
We will then combine these sketches into $M \cdot L^2$ new sketches of size $2K$
by using the tensoring technique. 
Specifically for every $i \in [M]$ and $j, k \in [L]$ we define
\(
    S_{i, j \cdot L + k}(A) = (S'_{2i, j}(A), S'_{2i + 1, k}(A))
\)
which is clearly well-defined. By using standard hashing techniques the new
sketches can be calculated in $O(M \cdot L^2)$ time. Then for any two sets
$A, B \subseteq U$, $i \in [M]$ and $j \in [L^2]$ we have the following
properties
\begin{itemize}
    \item $\Prp{S_{i, j}(A) = S_{i, j}(B)} = O(J(A, B)^{2K})$ \, .
    \item If $J(A, B) \geq j_1$ then
        $\Prp{S_{i, j}(A) = S_{i, j}(B)} = \Theta(J(A, B)^{2K})$ \, .
\end{itemize}

By setting $K = \ceil{\frac{\log(n)}{2\log(1/j_2)}}$, $L = 6\ceil{(1/j_1)^K}$
and $M = \Theta(\log(1/\eps))$ and using the analysis of \cite{IndykM98} 
immediately give us
\[
    O\left(M L^2 + \sum_{A \in \calF} \abs{A}\right)
        = O\left(n^{1 + \rho} \log(1/\eps) + \sum_{A \in \calF} \abs{A}\right)
\]
space usage and
\begin{align*}
    O\left(M L^2 + T(\log(1 / \eps)n^{\rho/2}, \log(n), |Q|)\right)
        = O\left(n^{\rho} \log(1/\eps) + T(n^{\rho/2} \log(1 / \eps), \log(n), |Q|)\right)    
\end{align*}
query time, where $T(x, y, z)$ is the time it takes to calculate $x$ sketches
of size $y$ for a set of size $z$, and $\rho = \frac{\log(1/j_1)}{\log(1/j_2)}$.

\begin{remark}
    The parameters $K$ and $L$ differs from the usual
    analysis of the LSH framework. We have that $j_2^K \leq \frac{1}{\sqrt{n}}$ instead of
    $j_2^K \leq \frac{1}{n}$, and $n^{\rho/2} = L \geq 6 j_1^K$ instead of
    $n^{\rho} = L \geq j_1^K$. The use of tensoring entails that most of the analysis
    will be centered around the sketches
    $\left(S'_{i, j}(A)\right)_{i \in [2M], j \in [L]}$. If we
    used the usual parameters then there would be $M\ceil{\sqrt{L}}$ sketches of
    size $\ceil{\frac{K}{2}}$. This would clutter the notation making the
    analysis harder to understand.
\end{remark}

In order to create the $2M \cdot L$ sketches
$\left(S'_{i, j}(A)\right)_{i \in [2M], j \in [L]}$
described above we first create $2M$ tables $T_{0}, \ldots, T_{2M - 1}$ of
size $L \times K$ such that for each $i \in [2M]$, $j \in [L]$ and $k \in [K]$
we have that $T_{i}[j, k]$ is an uniformly random integer chosen from the set
$\set{K S i + S j, \ldots, K S i + S(j + 1) - 1}$
where $S$ is parameter to be chosen later. The tables are independent of each
other. In every table the rows are chosen independently. Every row in every table
is filled using a source of 2-independence. Now for a given $A \subseteq U$
we do the following
\begin{enumerate}
    \item Let $S(A)$ be a size $2MKS$ similarity sketch of \Cref{sec:sketch}.
    \item For each $i \in [2M]$, $j \in [L]$ and $k \in [K]$ let
        $S'_{i, j}(A)[k] = S(A)\left[ T_i[j, k] \right]$.
\end{enumerate}
\begin{figure}
    \[
        \underbrace{
            \underbrace{
                \underbrace{|\text{\_}| \ldots |\text{\_}|}_{S}
                    \ldots
                \underbrace{|\text{\_}| \ldots |\text{\_}|}_{S}
            }_{K}
                \ldots
            \underbrace{
                \underbrace{|\text{\_}| \ldots |\text{\_}|}_{S}
                    \ldots
                \underbrace{|\text{\_}| \ldots |\text{\_}|}_{S}
            }_{K}
        }_{2M}
    \]
    \caption{The intermediate sketch $S(A)$ is first partitioned into $2M$
    segments which corresponds to the $2M$ subexperiments. Each of these
    segments then partitioned further into $K$ blocks of size $S$, which
    corresponds to the $K$ entries in the $L$ sketches in each of the
    subexperiments.}
    \label{fig:intermediateSketch}
\end{figure}
See \Cref{fig:intermediateSketch} for an intuition about how the intermediate sketch $S(A)$ is partitioned.
It takes $O(MKS \log MKS + \abs{A})$ time to create the sketch $S(A)$ by \Cref{lem:sketch_time},
and it takes $O(MLK)$ time to calculate $S'$. Thus the time needed to create sketches
$\left(S'_{i, j}(A)\right)_{i \in [2M], j \in [L]}$ for any $A \subseteq U$ is
$O\left(MLK + MKS \log MKS + \abs{A}\right)$. We will set $S = \Theta(K / j_1)$. Now the
total running time for creating the sketches 
$\left(S_{i, j}(A)\right)_{i \in [2M], j \in [L^2]}$ becomes
\[
    O\left(ML^2 + MLK + MKS \log(MKS) + \abs{A}\right)
        = O\left(n^{\rho} \log\left( 1/\eps \right) + \abs{A}\right)
\]
thus the running time of a query is $O(n^{\rho} \log\left( 1/\eps \right) + \abs{A})$.

\newtext{
The main issue is to prove that the intermediate
sketch has the desired properties despite the entries not being independent.
We call the vector of sketches $\left(S'_{i, j}(A)\right)_{j \in [L]}$ a
subexperiment for each $i \in [2M]$, and the vector of sketches 
$\left(S_{i, j}(A)\right)_{j \in [L^2]}$ an experiment for each $i \in [M]$.
It is clear that the experiment $\left(S_{i, j}(A)\right)_{j \in [L^2]}$ is
completely determined by the subexperiments $\left(S'_{2i, j}(A)\right)_{j \in [L]}$
and $\left(S'_{2i + 1, j}(A)\right)_{j \in [L]}$ for every $i \in [M]$ by construction.
We define $\calF_{bad}(Q) = \setbuilder{A \in \calF}{J(A, Q) \le j_2}$, which we will
call the family of bad sets with respect to $Q$. We also define
$\calF_{good}(Q) = \setbuilder{A \in \calF}{J(A, Q) \ge j_1}$, which we will
call the family of good sets with respect to $Q$. 
}

Furthermore it will be
useful to define the vector $\mathcal{M}'_{Q}(A)[i]$ which will be the number of matches
that the set $A \in \calF$ has in the $i \in [2M]$ subexperiment:
\[
    \mathcal{M}'_{Q}(A)[i] = \sum_{j \in [L]} \left[ S'_{i, j}(A) = S'_{i, j}(Q) \right],
\]
and the vector $\mathcal{M}_{Q}(A)[i]$ which will be the number of matches that
the set $A \in \calF$ has in the $i \in [M]$ experiment:
\[
    \mathcal{M}_{Q}(A)[i]
        = \sum_{j \in [L^2]} \left[ S_{i, j}(A) = S_{i, j}(Q) \right]
        = \mathcal{M}'_{Q}[2i]\mathcal{M}'_{Q}[2i+1]
\]

Each experiment will provide us with a lot of matches and amongst all those
matches we need to find a set $A$ with $J(A, Q) \ge j_2$, so we need to filter
away all the bad sets. To do this each experiment will choose one candidate set
amongst the matched sets. This will give us $O(\log(1/\eps))$ candidates which
we are allowed to use extra time checking by our time budget. We will use two
different techniques to choose the candidates depending on the number of matches.
If an experiment has $O(L)$ matches then we can afford to check each set using a
sketch of size $O(\max(\log n, \log(1/\eps)))$ since by using the minwise $b$-bit
hashing trick of Li et al.\cite{LiSMK11} this can be done in $O(L)$ time. If an
experiment has $\omega(L)$ matches then we pick a random match which also can
be done in $O(L)$ time.



We note that the experiments are conditionally independent given the intermediate
sketch, $S(Q)$, that is, if we fix the intermediate sketch then the experiments are
independent. The goal is to show that the intermediate sketch satisfies
the following with probability at least $1 - \frac{\eps}{3}$:
After fixing the intermediate sketch at least a constant fraction of the experiments
have the properties: (a) we expect
$O\left( j_2^{2K} n \right)$ bad matches in expectation, and (b) the probability
that a good set $A^{*} \in \calF_{good}(Q)$ is matched is at least $\Omega(j_1^{2K})$.
To show this we need a bit of notation, for a set $A \in \calF$ we define
\[
    Y_{i}(A) = \frac{\sum_{j \in [S]} \left[ S(A)[iS + j] = S(B)[iS + j] \right]}{S}
\]
for every $i \in [2MK]$, which corresponds to the block of the subexperiments,
and for every $l \in [2M]$ we define
$
    Z_{l}(A) = \prod_{i \in [K]} Y_{lK + i}(A)
$.
From these definition we get that
\[
    \Prpcond{S_{i, j}(A) = S_{i, j}(Q)}{Z_{2l}(A), Z_{2l + 1}(A)} = Z_{2i}(A) \cdot Z_{2i + 1}(A) \; .
\]

If the intermediate sketch is fixed then
\[
    \Epcond{\sum_{A \in \calF} \mathcal{M}_{Q}(A)[i]}{Z_{2i}, Z_{2i + 1}} = L^2 \sum_{A \in \calF_{bad}} Z_{2i}(A) \cdot Z_{2i + 1}(A)
\]
So to prove (a) we need to show that for most $i \in [M]$ then
$\sum_{A \in \calF_{bad}(Q)} Z_{2i}(A) \cdot Z_{2i + 1}(A) = O\left( j_2^{2K} n \right)$,
which formalized in the next Lemma.
\begin{lemma}\label{lem:int-bad-sets}
    Let $I \subseteq [M]$ be a set of $d$ indices, and $C$ a constant, then
    \[
        \Prp{\bigwedge_{i \in I} \left(
            \sum_{A \in \calF_{bad}(Q)} Z_{2i}(A) \cdot Z_{2i + 1}(A)
            \ge C j_2^{2K} n
        \right)} \le \left( \frac{e}{C} \right)^d \; .
    \]
\end{lemma}
\begin{proof}
    Using Markov's inequality we note that
    \begin{align*}
        &\Prp{\bigwedge_{i \in I} \left(
            \sum_{A \in \calF_{bad}} Z_{2i}(A) \cdot Z_{2i + 1}(A)
            \ge C j_2^{2K} n
        \right)}
    &\quad\le
        \Prp{\prod_{i \in I}
            \sum_{A \in \calF_{bad}} Z_{2i}(A) \cdot Z_{2i + 1}(A)
            \ge \left( C j_2^{2K} n \right)^d
        }
    \\ &\quad\le
        \frac{\Ep{\prod_{i \in I}
                \sum_{A \in \calF_{bad}} Z_{2i}(A) \cdot Z_{2i + 1}(A)}}
             {\left( C j_2^{2K} n \right)^d} \; ,
    \end{align*}
    so we just need to show that
    \[
        \Ep{\prod_{i \in I}
            \sum_{A \in \calF_{bad}} Z_{2i}(A) \cdot Z_{2i + 1}(A)}
        \le \left( e j_2^{2K} n \right)^d \; .
    \]
    To do this we enumerate $I = \set{v_0, \ldots, v_{d - 1}}$ and consider any
    $A_0, \ldots, A_{d - 1} \in \calF_{bad}(Q)$.
    \begin{align*}
            &\Ep{\prod_{i \in [d]} Z_{2v_i}(A_i) \cdot Z_{2v_i + 1}(A_i)}
        \\&\quad= 
            \prod_{i \in [d]}\Epcond{Z_{2v_i}(A_i) \cdot Z_{2v_i + 1}(A_i)}{\sigma\left( \left(Z_{2v_j}, Z_{2v_j + 1}\right)_{j \in [i - 1]} \right)}
        \\&\quad\le
            \prod_{i \in [d]}
                \left( \frac{2MKS \cdot J(A_i, Q)}{2MKS - (i - 1)2K} \right)^{2K}
        \\&\quad\le
            \left( \left( \frac{2MKS \cdot j_2}{2MKS - 2MK} \right)^{2K} \right)^d
        \\&\quad\le
            \left( \left( \frac{S \cdot j_2}{S - 1} \right)^{2K} \right)^d
        \\&\quad\le
            \left( e j_2^{2K} \right)^d
    \end{align*}
    where the first inequality follows by using \Cref{lem:moment}, and the last
    inequality follows by the definition of $S$ and $K$. Using this we see that
    \[
        \Ep{\prod_{i \in I} \sum_{A \in \calF_{bad}} Z_{2i}(A) \cdot Z_{2i + 1}(A)}
        \le \abs{\calF_{bad}(Q)}^d \cdot \left( e j_2^{2K} \right)^d
        \le \left( e j_2^{2K} n \right)^d
    \]
    which proves the result.
\end{proof}

Since $\Prpcond{S_{i, j}(A) = S_{i, j}(Q)}{Z_{2l}(A), Z_{2l + 1}(A)} = Z_{2i}(A) \cdot Z_{2i + 1}(A)$,
then to show (b): We need that for most $i \in [M]$ then $Z_{2i}(A) \cdot Z_{2i + 1}(A) \ge \Omega(j_1^{2K})$,
which is formalized in the next lemma.
\begin{lemma}\label{lem:int-good-set}
    For every set of indices $I \subseteq [M]$ of size $d$ and real number
    $\delta \in [0, 1]$, we have that
    \[
        \Prp{\bigwedge_{i \in I} \Big( Z_{2i}(A^*) \leq (1 - \delta)j_1^{K} \vee  Z_{2i + 1}(A^*) \leq (1 - \delta)j_1^{K} \Big)}
            \leq \left( 6 e^{-\delta^2 j_1 S / (32K)} \right)^d \; .
    \]
\end{lemma}
First we need the following technical lemma.
\begin{lemma}\label{lem:fixedProd}
    Let $t, k, s$ be positive integers such that $t = ks$ and let $\alpha$ be a
    real number such that $\alpha t$ is a positive integer. Let $\mathcal{B}$
    be the set of all $t$-tuples from $\set{0, 1}^t$ for which the sum of
    entries is exactly $\alpha t$, i.e.:
    \[
        \mathcal{B} = \setbuilder{(b_0, \ldots, b_t) \in \set{0, 1}^t}
                                 {\sum_{i \in [t]} b_i = \alpha t}
    \]
    Let $(a_0, \ldots, a_t)$ be drawn uniformly random from $\mathcal{B}$, and
    let $Y_i = \sum_{j \in [s]} a_{is + j}$ for every $i \in [k]$. Then for any
    real number $\delta \in [0, 1]$:
    \[
        \Prp{\prod_{i \in [k]} Y_i \le (1 - \delta)(\alpha s)^k}
            \le 2e^{-\delta^2 \alpha s / (10 k)}
    \]
\end{lemma}
\begin{proof}
    First note that we can assume that
    $2e^{-\delta^2 \alpha s / (10 k)} \le 1$ which implies that
    $5k \le \delta^2 \alpha s$.

    Let $l$ be a positive integer to be chosen later. If
    $\prod_{i \in [k]} Y_i \le (1 - \delta)(\alpha s)^k$, then one of two
    events must be true: either there exists $i \in [k]$ such that 
    $Y_i \leq \frac{\alpha s}{2}$ , or we have that
    \begin{align*}
        \prod_{i \in [k]} \left( Y_i + l \right)^{\underline{l}}
        &\le \prod_{i \in [k]} \left( Y_i + l \right)^{l}\
        \\
        &\le \left( \prod_{i \in [k]} \left( 1 + \frac{l}{Y_i} \right)^{l} \right) \left( \prod_{i \in [k]} Y_i \right)^{l}
        \\
        &\le \left( \prod_{i \in [k]} \left( 1 + \frac{l}{Y_i} \right)^{l} \right) (1 - \delta)^{l}(\alpha s)^{k l}
        \\
        &\le e^{\left( \sum_{i \in [k]} \frac{1}{Y_i} \right) l^2 - \delta l} (\alpha s)^{k l}
        \\
        &\le e^{\frac{2k}{\alpha s} l^2 - \delta l} (\alpha s)^{k l}
    \end{align*}

    For any $i \in [k]$ the probability that  $Y_i \le \frac{\alpha s}{2}$ is
    at most $e^{-\alpha s / 8}$ by standard Chernoff bound. Hence the
    probability that there exists such an $i$ is at most $k e^{- \alpha s / 8}$
    by union bound.  Now we note that
    \[
        k e^{- \alpha s / 8} 
        = k \left( e^{-\delta^2 \alpha s / (10k)} \right)^{10k/(8\delta^2)}
        \le k \left( e^{-\delta^2 \alpha s / (10k)} \right)^k
        \le e^{-\delta^2 \alpha s / (10k)}
    \]
    where the last inequality comes from the fact that $x a^x \le a$ if
    $a \le \frac{1}{2}$ and $x$ is a positive integer.

    Now we want to bound the other event. By using Markov's inequality we have
    that
    \begin{align*}
        \Prp{\prod_{i \in [k]} \left( Y_i + l \right)^{\underline{l}}
             \le e^{\frac{2k}{\alpha s} l^2 - \delta l} (\alpha s)^{kl}}
        &= \Prp{\left( \prod_{i \in [k]} \left( Y_i + l \right)^{\underline{l}} \right)^{-1}
            \ge e^{-\frac{2k}{\alpha s} l^2 + \delta l} (\alpha s)^{-kl}}
        \\ 
        &\le \Ep{\left( \prod_{i \in [k]} \left( Y_i + l \right)^{\underline{l}} \right)^{-1}}
            \cdot e^{\frac{2k}{\alpha s} l^2 - \delta l} (\alpha s)^{kl}
    \end{align*}
    Now we want to show that
    \[
        \Ep{\left( \prod_{i \in [k]} \left( Y_i + l \right)^{\underline{l}} \right)^{-1}}
        \le (\alpha s)^{-kl}
    \]

    First we define
    \[
        \mathcal{C} = 
            \setbuilder{(c_0, \ldots, c_{k - 1}) \in \set{0, \ldots, s}^k}
                       {\sum_{i \in [k]} c_i = \alpha t}
    \]
    and note that for $(c_0, \ldots, c_{k - 1}) \in \mathcal{C}$ the probability
    that $Y_i = c_i$ for all $i \in [k]$ is exactly
    \[
        \binom{t}{\alpha t}^{-1} \prod_{i \in [k]} \binom{s}{c_i} 
    \]
    Thus the expected value is
    \[
        \binom{t}{\alpha t}^{-1}
            \sum_{(c_0, \ldots, c_{k - 1}) \in \mathcal{C}}
            \prod_{i \in [k]} \left( \binom{s}{c_i} \frac{1}{(c_i + l)^{\underline{l}}} \right)
    \]
    Now we note that
    \[
        \binom{s}{c_i} \frac{1}{(c_i + l)^{\underline{l}}}
        = \frac{1}{(s + l)^{\underline{l}}} \binom{ s + l }{ c_i + l }
    \]
    Hence we get that
    \begin{align*}
        &\binom{t}{\alpha t}^{-1}
            \sum_{(c_0, \ldots, c_{k - 1}) \in \mathcal{C}}
            \prod_{i \in [k]} \left( \binom{s}{c_i} \frac{1}{(c_i + l)^{\underline{l}}} \right)
        \\
        &\qquad=
            \binom{t}{\alpha t}^{-1}
            \left( \frac{1}{(s + l)^{\underline{l}}} \right)^k
            \sum_{(c_0, \ldots, c_{k - 1}) \in \mathcal{C}}
            \prod_{i \in [k]} \binom{ s + l }{ c_i + l }
    \end{align*}
    We see that we are in fact summing over all $t$-tuples $(c_0, \ldots, c_{k - 1})$
    where $c_i \in \set{l, \ldots, l + s}$ and
    $\sum_{i \in[k]} c_i = \alpha t + kl$. So we define
    \[
        \mathcal{C}' = 
        \setbuilder{(c_0, \ldots, c_{k - 1}) \in \set{0, \ldots, s + l}^k}
                   {\sum_{i \in [k]} c_i = \alpha t + lk}
    \]
    and get that
    \begin{align*}
        &\binom{t}{\alpha t}^{-1}
            \left( \frac{1}{(s + l)^{\underline{l}}} \right)^k
            \sum_{(c_0, \ldots, c_{k - 1}) \in \mathcal{C}}
            \prod_{i \in [k]} \binom{ s + l }{ c_i + l }
        \\
        &\qquad\le
            \binom{t}{\alpha t}^{-1}
            \left( \frac{1}{(s + l)^{\underline{l}}} \right)^k
            \sum_{(c_0, \ldots, c_{k - 1}) \in \mathcal{C}'}
            \prod_{i \in [k]} \binom{ s + l}{ c_i }
        \\
        &\qquad=
            \binom{t}{\alpha t}^{-1}
            \left( \frac{1}{(s + l)^{\underline{l}}} \right)^k
            \binom{ t + kl}{\alpha t + kl}
        \\
        &\qquad=
            \frac{(t + kl)^{\underline{kl}}}{(\alpha t + kl)^{\underline{kl}}}
            \left( \frac{1}{(s + l)^{\underline{l}}} \right)^k
    \end{align*}
    It is clear that
    \[
        \left(k^l(s + l)^{\underline{l}}\right)^k
            \ge (ks + kl)^{\underline{kl}}
            = (t + kl)^{\underline{kl}}
    \]
    hence using this we get that
    \begin{align*}
        \frac{(t + kl)^{\underline{kl}}}{(\alpha t + kl)^{\underline{kl}}}
            \left( \frac{1}{(s + l)^{\underline{l}}} \right)^k
        \le \frac{k^{kl}}{(\alpha t + kl)^{\underline{kl}}}
        \le \frac{k^{kl}}{(\alpha t)^{kl}}
        = \frac{1}{(\alpha s)^{kl}} 
    \end{align*}
    So the probability is bounded by
    \[
        e^{\frac{2k}{\alpha s} l^2 - \delta l}
    \]
    which has global minimum when $l = \frac{\delta \alpha s}{4 k}$. Since $l$
    is an integer we get that the probability is bounded by
    \begin{align*}
        e^{\frac{2k}{\alpha s} \left(\frac{\delta \alpha s}{4k} \pm \frac{1}{2} \right)^2
          - \delta \left(\frac{\delta \alpha s}{4k} \pm \frac{1}{2} \right)}
        = e^{-\delta^2 \alpha s/(8 k) + k/(2 \alpha s)}
    \end{align*}
    Now using that $5k \le \delta^2 \alpha s \le \alpha s$ we get that
    $k/(2 \alpha s) \leq \delta^2 \alpha s / (50 k)$. We see that
    $-\delta^2 \alpha s/(8 k) + \delta^2 \alpha s / (50 k) \le -\delta^2 \alpha s / (10 k)$
    which finishes the proof.
\end{proof}
We can now prove that most $Z_{i}(A^*) \geq (1 - \delta)j_1^{K}$, which will
almost prove \Cref{lem:int-good-set}.
\begin{lemma}\label{lem:int-good-set-helper}
    For every set of indices $I \subseteq [2M]$ of size $d$ and real number
    $\delta \in [0, 1]$, we have that
    \[
        \Prp{\bigwedge_{i \in I} Z_{i}(A^*) \leq (1 - \delta)j_1^{K}}
            \leq \left( 3 e^{-\delta^2 j_1 S / (32K)} \right)^d
    \]
\end{lemma}
\begin{proof}
    The first we do is to write the probability as an expectation
    \[
        \Prp{\bigwedge_{l \in I} Z_l \le (1 - \delta)j_1^K}
            = \Ep{\prod_{l \in I} \left[ Z_l \le (1 - \delta)j_1^K \right]}
    \]
    For every $l \in I$ we define the random variable $\alpha_l$ by
    \[
        \alpha_l = \frac{\sum_{i \in [K]} Y_{lK + i}}{K}
    \]
    and let $P_l(\alpha)$ be the probability that $Z_l \le (1 - \delta)j_1^K$
    given that $\alpha_l = \alpha$. We then get that
    \begin{align*}
        \Ep{\prod_{l \in I} \left[ Z_l \le (1 - \delta)j_1^K \right]}
        &=
        \Ep{\Epcond{\prod_{l \in I} \left[ Z_l \le (1 - \delta)j_1^K \right]}
                   {\sigma\left( (\alpha_l)_{l \in I} \right)}}
        \\
        &=
        \Ep{\prod_{l \in I} P_l(\alpha_l)}
    \end{align*}
    Now we will split the random variable $P_l(\alpha_l)$ as follows
    \begin{align*}
        P_l(\alpha_l)
        = 
            \left[ \alpha_l \le \left(1 - \frac{\delta}{4K} \right)j_1 \right] P_l(\alpha_l)
            + \left[ \alpha_l > \left(1 - \frac{\delta}{4K} \right)j_1 \right] P_l(\alpha_l)
    \end{align*}
    If $\alpha_l > \left(1 - \frac{\delta}{4K} \right)j_1$ then
    $\alpha_l^K > \left( 1 - \frac{\delta}{4} \right)j_1^K$ so
    $\left(1 - \frac{3}{4}\delta\right)\alpha_l^K > (1 - \delta)j_1^K$. Using
    \Cref{lem:fixedProd} we get that
    \begin{align*}
        \left[ \alpha_l > \left(1 - \frac{\delta}{4K} \right)j_1 \right] P_l(\alpha_l)
        &\le
        \left[ \alpha_l > \left(1 - \frac{\delta}{4K} \right)j_1 \right]
            2e^{-\delta^2 \alpha_l S 3^2 / (4^2 \cdot 10 K)}
        \\
        &\le
        2e^{-\delta^2 j_1 S 3^3 / (4^3 \cdot 10 K)}
        \\
        &\le
        2e^{-\delta^2 j_1 S / (32 K)}
    \end{align*}
    where the second inequality uses that
    $\alpha_l > \left(1 - \frac{\delta}{4K} \right)j_1 \ge \frac{3}{4}j_1$ and last
    inequality uses that $\frac{3^3}{4^3 \cdot 10} > \frac{1}{32}$.

    Now using this we get that
    \begin{align*}
        \Ep{\prod_{l \in I} P_l(\alpha_l)}
        \le \Ep{\prod_{l \in I} \left( 
            \left[ \alpha_l \le \left(1 - \frac{\delta}{4K} \right)j_1 \right]
            + 2e^{-\delta^2 j_1 s / (32 k)}
        \right)}
    \end{align*}
    Now for every subset $I' \subseteq I$ of size $d'$ we have that
    \begin{align*}
        \Ep{\prod_{l \in I'} \left[ \alpha_l \le \left(1 - \frac{\delta}{4K} \right)j_1 \right]}
        &\le
        \Prp{\frac{\sum_{l \in I'} \alpha_l}{d'} \le \left(1 - \frac{\delta}{4K} \right)j_1}
        \\
        &=
        \Prp{\frac{\sum_{l \in I'} \sum_{i \in [K]} Y_{lK + i}}{d' t} \le \left(1 - \frac{\delta}{4K} \right)j_1}
        \\
        &\le
        e^{-\delta^2 j_1 d' S / (32K)}
        \\
        &=
        \left( e^{-\delta^2 j_1 S / (32K)} \right)^{d'}
    \end{align*}
    where the inequality uses that \Cref{lem:chernoffLemma}. Hence we get that
    \[
        \Prp{\bigwedge_{l \in I} Z_l \le (1 - \delta)j_1^K}
            \leq \left( 3 e^{-\delta^2 j_1 S / (32K)} \right)^d
    \]
    as we wanted.
\end{proof}
We are now ready to prove \Cref{lem:int-good-set} which follows easily from
\Cref{lem:int-good-set-helper}.
\begin{proof}[Proof of \Cref{lem:int-good-set}]
    By union bound and \Cref{lem:int-good-set-helper} we get that
    \begin{align*}
        &\Prp{ \bigwedge_{i \in I} \left( Z_{2i}(A^*) \leq (1 - \delta)j_1^{K} \vee  Z_{2i + 1}(A^*) \leq (1 - \delta)j_1^{K} \right)}
            \\
            &\quad\le \sum_{(v_i)_{i \in I} \in \set{0, 1}^d} \Prp{ \bigwedge_{i \in I} \left(
                Z_{2i + v_i}(A^*) \leq (1 - \delta)j_1^{K}
            \right)}
            \\
            &\quad\le \sum_{(v_i)_{i \in I} \in \set{0, 1}^d}
                \left( 3 e^{-\delta^2 j_1 S / (32K)} \right)^d
            \\
            &\quad= \left( 6 e^{-\delta^2 j_1 S / (32K)} \right)^d
    \end{align*}
\end{proof}

Having \Cref{lem:int-bad-sets} and \Cref{lem:int-good-set} it becomes easy to
prove the main theorem of this section.
\begin{theorem}\label{thm:good-experiments}
    Let $A \in \calF_{good}(Q)$ and,
    $M_{good} \subseteq [M]$ be the set of experiments, such that, for every
    $i \in M_{good}$
    \begin{enumerate}
        \item $
            \sum_{A \in \calF_{bad}(Q)} Z_{2i}(A) \cdot Z_{2i + 1}(A)
                \le 4 e j_2^{2K} n
        $ \; ,
        \item $
            Z_{2i}(A^*) \ge \frac{1}{2}j_1^{K} \wedge  Z_{2i + 1}(A^*) \ge \frac{1}{2}j_1^{K}
        $ \; .
    \end{enumerate}
    Then $\abs{M_{good}} \ge \frac{M}{3}$ with probability at least $1 - \frac{\eps}{2}$.
\end{theorem}
\begin{proof}
    By \Cref{lem:int-bad-sets}, \Cref{lem:int-good-set}, and a standard analysis of Chernoff bounds, we have Chernoff-type bounds on the number of experiments which does not satisfy either of the requirements.
    Now choosing $S$ large enough we get that $6 e^{-\frac{1}{2}^2 j_1 S / (32K)} \le \frac{1}{4}$,
    so choosing $M$ large enough at most $\frac{M}{3}$ experiments does
    not satisfy the first requirement with probability at least $1 - \frac{\eps}{4}$,
    and at most $\frac{M}{3}$ experiments
    does not satisfy the second requirement  with probability at least $1 - \frac{\eps}{4}$.
    Now a union bound finishes the proof.
\end{proof}

\subsection{Filtering}
Using \Cref{thm:good-experiments},
we get that good set $A^{*} \in \calF_{good}(Q)$ is matched with probability
at least $1 - \frac{3}{4}\eps$. Now it is not enough to know that a good set $A^{*} \in \calF_{good}(Q)$ is matched
in one of the experiments with high probability, we also need to be able to
find the good set amongst all the matched sets. To do this we will apply a
two-tiered filtering algorithm. First we use a fast and imprecise filtering step
to select a candidate $C_i \in \calF$ amongst the matched set in the
$i$'th experiment for every experiment $i \in [M]$. This will give us $M$
candidates $C_0, \ldots, C_{M - 1}$ for which will use a slower and more
precise filtering step to select the actual set. We will show that this
two-tiered filtering algorithm finds a good set with high probability if such
a set exists.

In the first filtering step we will use two different approaches depending on
the number of matches in an experiment. If there is more than $2CL^2$ matches in
a experiment then we will choose a random match. If there is less than $2CL^2$
matches then we will check every match using a sketch of size
$\Theta(\max(\log n, \log(1/\eps)))$, this can be done in constant time per
element by using minwise $b$-bit hashing, and pick the first element that is above a
certain threshold. If no element is above the threshold then we will for the
sake of the analysis assume that the candidate picked is the empty set.

In the second filtering step we have a set of $M = O(\log(1/\eps))$ candidates,
which we will check using a sketch of size $\Theta(\log^2(n) \log(1/\eps))$, again
using minwise $b$-bit hashing this can be done in $\Theta(\log^2(n))$ time per element. This
allows us distinguish between elements with similarity less than $j_2$ and
elements with similarity at least $j'_2$. Again we pick the first element that
is above a certain threshold. 

We define
\[
    \calM_{Q}(A)[i] = \sum_{j \in [L]} \left[ S_{i, j}(A) = S_{i, j}(Q) \right]
\]
to be the number of matches of $A \in \calF$ in the $i$'th experiment where $i \in [M]$.
Now let 
\[
    I_{Q}(A^{*}) = \setbuilder{i \in M_{good}}
    {
        \left( \calM_{Q}(A^{*})[i] > 0 \right) 
        \wedge
        \left( \sum_{A \in \calF_{bad}(Q)} \calM_{Q}(A)[i] \le CL^2 \right)
    }
\]
be the set of experiments where $A^{*} \in \calF_{good}(Q)$ is matched and
where there are not to many bad sets are matched. Using
\Cref{thm:good-experiments} and the analysis from Dahlgaard et al.\cite{DahlgaardKT17}
we get that $\abs{I_{Q}(A^{*})} = \Omega\left(\log\left(1/\eps\right)\right)$
with probability at least $1 - \frac{3}{4}\eps$ by choosing 
$M = \Theta(\log(1/\eps))$ and $C = \Theta(1)$ large enough.

We want to show that at least one of the experiments chooses a candidate with
similarity at least $j'_2$, in particular we will show that
\[
    \Prp{\bigwedge_{i \in I_{Q}(A^{*})} \left( J(C_i, Q) < j'_2 \right)}
        \le \left( \frac{1}{2} \right)^{\abs{I_{Q}(A^{*})}}
\]
Since we will filter in two different ways depending on the number of matches,
we will split $I_{Q}(A^*)$ into two:
The experiments $I'_{Q}(A^*)$ with many matches
\[
    I'_{Q}(A^*) = \setbuilder{i \in I_{Q}(A^{*})}{\sum_{A \in \calF} \calM_{Q}(A)[i] > 2CL^2}
\]
and the experiments $I''_{Q}(A^*)$ with few matches
\[
    I''_{Q}(A^{*}) = \setbuilder{i \in I_{Q}(A^{*})}{\sum_{A \in \calF} \calM_{Q}(A)[i] \le 2CL^2}
\]

First we consider the case where the is at least one good experiment with few
matches. As mentioned we will create a sketch $T(Q)$ of size
$t = \Theta(\max(\log n, \log(1/\eps)))$ using the minwise $1$-bit hashing trick.
We then get that
\[
    \Prp{T(Q)[i] = T(A)[i]}
        = J(A, Q) + (1 - J(A, Q))\frac{1}{2}
        = \frac{1 + J(A, Q)}{2}
\]
for any $A \in \calF$ and $i \in [t]$. We pick the threshold
$\gamma = \frac{1 + \frac{j_1 + j'_2}{2}}{2}$. Using Hoeffding's inequality we
get that for $A \in \calF_{good}(Q)$ then
\[
    \Prp{\sum_{i \in [t]} [T(A)[i] = T(Q)[i]] \le \gamma t}
        \le \frac{\eps}{8} \cdot \frac{1}{n}
\]
and for $A \in \calF_{bad}(Q)$ then
\[
    \Prp{\sum_{i \in [t]} [T(A)[i] = T(Q)[i]] \ge \gamma t}
        \le \frac{\eps}{8} \cdot \frac{1}{n}
\]
So a union bound over all the sets in all the experiments, for which there are
at most $O(CL^2 \log(1/\eps))$, shows that the probability that none of the candidates
has similarity at least $j'_2$ is at most $\frac{\eps}{8}$. It takes $O(1)$
time to check each set so it takes $O(L^2)$ time to filter each experiment.

Now we assume that every good experiment has many matches. We note that
since we pick a random element in every experiment independently then we get that
\[
    \Prp{\bigwedge_{i \in I'_{Q}(A^{*})} \left( J(C_i, Q) < j'_2 \right)}
        = \prod_{i \in I'_{Q}(A^{*})} \Prp{J(C_i, Q) < j'_2}
\]
Using Markov's inequality we get that
\[
    \Prp{J(C_i, Q) < j'_2} \le \frac{1}{2}
\]
for $i \in I'{Q}(A^{*})$ since we know that
$\sum_{A \in \calF_{bad}(Q)} \calM_{Q}(A)[i] \le CL^2$. This shows that
\[
    \Prp{\bigwedge_{i \in I'_{Q}(A^{*})} \left( J(C_i, Q) < j'_2 \right)}
        \le \left( \frac{1}{2} \right)^{\abs{I'_{Q}(A^*)}}
\]
Since every experiment has many matches this implies that the probability that
none of the candidates has similarity at least $j'_2$ is at most $\frac{\eps}{8}$.
This shows that the probability that the first step of filtering fails is at
most $\frac{\eps}{8}$. 

We have now shown that the probability, that none of the candidates has Jaccard
similarity at least $j'_2$, is at most $\frac{7}{8}\eps$. Now to find the
correct candidate we will create a sketch $T'(Q)$ of size
$t' = \Theta(\log^2(n)\log(1/\eps))$ using the minwise $1$-bit hashing trick.
We pick the threshold $\gamma' = \frac{1 + \frac{j_2 + j'_2}{2}}{2}$.
Using Hoeffding's inequality we get that for candidates $C$ with
$J(C, Q) \ge j'_2$  then
\[
    \Prp{\sum_{i \in [t]} [T(C)[i] = T(Q)[i]] \le \gamma' t'}
        \le \frac{\eps^2}{8}
\]
and for candidates $C$ with $J(C, Q) \le j_2$ then
\[
    \Prp{\sum_{i \in [t]} [T(C)[i] = T(Q)[i]] \ge \gamma' t'}
        \le \frac{\eps^2}{8}
\]
A union bound over all $O(\log(1/\eps))$ candidates shows that the probability,
that the set chosen has similarity less than $j_2$, is at most $\frac{\eps}{8}$.
Checking each candidate takes $O(\log^2(n))$ time so we use a total of
$O(\log^2(n)\log(1/\eps))$ time for second filtering step.

Combining all the steps shows that the data structure finds a set $A \in \calF$
with $J(A, Q) \ge j_2$ if there exists a set $B \in \calF$ with
$J(B, Q) \ge j_1$ with probability at least $1 - \eps$. Now since we want the
data structure to only have a 1-sided error then we calculate the exact Jaccard
similarity of the set in $O(\abs{Q})$ time.
}

\newpage
\bibliographystyle{plain}
\bibliography{simsketch.bib}


\end{document}